\renewcommand{\vec}[1]{\mathbf{#1}}	
\newtheorem{theorem}{Theorem} 	
\newtheorem{proposition}{Proposition} 	
\newtheorem{definition}{Definition} 
\newtheorem{corollary}{Corollary}[theorem] 	%
\newtheorem{lemma}{Lemma}
\newcommand{\ignore}[1]{}
\algrenewcommand\algorithmicwhile{\textbf{nodes}}
\algrenewcommand\algorithmicrequire{\textbf{Precondition:}}
\algrenewcommand\algorithmicensure{\textbf{Postcondition:}}
\DeclareMathOperator{\tr}{tr}
\begin{document}
%
\title{Harnessing NLOS Components for Position and Orientation Estimation in 5G mmWave MIMO}
%
%
%

\author{Rico~Mendrzik,~\IEEEmembership{Student Member,~IEEE,}
        Henk~Wymeersch,~\IEEEmembership{Member,~IEEE},
				Gerhard~Bauch,~\IEEEmembership{Fellow,~IEEE},
				Zohair~Abu-Shaban,~\IEEEmembership{Member,~IEEE}%
\thanks{R. Mendrzik and  G. Bauch are with the Institute of Communications, Hamburg University of Technology, Hamburg,
 21073 Germany. H. Wymeersch is with the Department of Electrical Engineering, Chalmers University, Gothenburg,
 Sweden. Zohair Abu-Shaban is with the Research School of Engineering (RSEng) at the Australian National University (ANU). This work is supported, in part, by the Horizon2020 projects 5GCAR and HIGHTS (MG-3.5a-2014-636537), and the VINNOVA COPPLAR project, funded under Strategic Vehicle Research and Innovation Grant No. 2015-04849.}
}

\maketitle

\begin{abstract}
In the past, NLOS propagation was shown to be a source of distortion for radio-based positioning systems. Every NLOS component was perceived as a perturbation which resulted from the lack of temporal and spatial resolution of previous cellular systems. Even though 5G is not yet standardized, a strong proposal, which has the potential to overcome the problem of limited temporal and spatial resolution, is the massive MIMO millimeter wave technology. Based on this proposal, we reconsider the role of NLOS components for 5G position and orientation estimation purposes. Our analysis is based on the concept of Fisher information. We show that, for sufficiently high temporal and spatial resolution, NLOS components always provide position and orientation information which consequently increases position and orientation estimation accuracy. We show that the information gain of NLOS components depends on the actual location of the reflector or scatter. Our numerical examples suggest that NLOS components are most informative about the position and orientation of a mobile terminal when corresponding reflectors or scatterers are illuminated with narrow beams. 
\end{abstract}


%
\IEEEpeerreviewmaketitle

\section{Introduction}
\label{section_1}
%
%
%
%
\subsection{Motivation and State of the Art}
\IEEEPARstart{I}{n} many conventional wireless networks, multipath (MP) propagation is considered as a distorting effect, which cannot be leveraged for positioning of network nodes, when no prior information regarding the location of the corresponding point of incidence\footnote{In order to cover both reflectors and scatterers, we use the term \textit{point of incidence} in place of the location of a scatterer and the point of reflection of a reflector.} is available \cite{HSZWM2016,SW2010,SWW2010,SW2010b}. The reason is that the information enclosed in the waveform of the received signal is not rich enough to resolve the non-line-of-sight (NLOS) components in space and time. The fifth generation (5G) networks are expected to use signals in the millimeter wave (mmWave) band \cite{ABCHLSZ2014} and employ massive multiple input multiple output (MIMO) to compensate for the high path loss \cite{RRE2014,GTCRMVRMSN2014}. Particularly, mmWave MIMO systems operate at a carrier frequency  beyond 28 GHz using a large number of antennas at the base station and the mobile terminal \cite{ABCHLSZ2014,PK2011,RSMZAWWSSG2013,HGRRS2016,OER2015}. In the mmWave band, large contiguous frequency blocks are available enabling the support of high data rates \cite{GTCRMVRMSN2014,BH2015}. The large bandwidth in the mmWave band \cite{OER2015} result in high temporal resolution \cite{PAT2005}. Moreover, the short wavelength of mmWave signals makes it possible to accommodate a large number of antennas in a small area \cite{SGDSW2015,PK2011}. Hence large antenna arrays can be expected for both base stations as well as mobile terminals. Large antenna arrays, in turn, allow for extremely narrow beams which enable accurate spatial resolution in the angular domain \cite{LCP2005,LSS2009}. Even though the positioning capabilities of mmWave MIMO in 5G are not yet fully explored, the high temporal and spatial resolutions of mmWave MIMO systems suggest that NLOS components can be resolved and hence can be harnessed for position and orientation estimation.

The fundamental limits of position and orientation estimation using mmWave MIMO in 5G have been recently investigated in \cite{GGD2017,SGDSW2017,SZASW2017}. In \cite{GGD2017}, a single anchor localization scheme is presented for indoor scenarios. The Fisher information matrix (FIM) of the position and orientation parameters as well as the NLOS parameters was presented. Based on this FIM the position error bound (PEB) and orientation error bound (OEB) were derived numerically. Different array configurations were considered. It was shown that increasing the number of antenna array elements increases the localization accuracy. In \cite{SGDSW2017}, the FIM of all channel parameters was presented. Using the geometric relationship of the channel parameters and the position and orientation-related parameters, the FIM of the position and orientation-related parameters was derived in closed-form. Moreover, the PEB and the OEB were determined numerically, and algorithms which attain the previously determined bounds were also presented. It was shown numerically that even in NLOS situations, positioning with reasonable accuracy is possible. In \cite{SZASW2017},  fundamental limits of position and orientation estimation for uplink and downlink in 3D-space were presented. The FIM of the channel parameters was derived in a closed form similar to \cite{SGDSW2017}, which provided the FIM of the 2D channel parameters. Moreover, the structure of this FIM was analyzed and it was shown to become block diagonal when the bandwidth and the number of receive and transmit antennas are sufficiently large. In contrast to \cite{SGDSW2017}, which considered uniform linear arrays, \cite{SZASW2017} presented the derivation of the PEB and the OEB in closed-form for any arbitrary antenna array structure. The PEB and the OEB were derived similarly to \cite{SGDSW2017}. In addition, the influence of different array types on the PEB and the OEB was investigated. Moreover, differences in the uplink and downlink were considered. 

NLOS components have already been proven to be useful for indoor navigation \cite{LMMWH2016,KLMHW2016,WMLSGTHDMCW2016,GJWZDF2016,GMMUJD2016}. In \cite{GJWZDF2016,GMMUJD2016}, a two-stage approach is adopted to estimate and track the position of the mobile terminal and the positions of virtual anchors. Virtual anchors mimic a line-of-sight (LOS) transmission for every NLOS component. In the first stage of the approach in \cite{GJWZDF2016,GMMUJD2016}, the complex channel gains, and delays are estimated and tracked. Based on these results, the position of the mobile terminal and the locations of the virtual anchors are estimated and tracked. The approaches in \cite{LMMWH2016,KLMHW2016,WMLSGTHDMCW2016} leverage the huge bandwidth of ultra-wideband (UWB) signals in order to resolve NLOS components in time. Each NLOS component is then associated with a virtual anchor. NLOS components can be associated with virtual anchors using, e.g., belief propagation \cite{LMMWH2016} or optimal sub-pattern assignment \cite{KLMHW2016}. In order to reliably associate NLOS components with virtual anchors, multiple observations and mobility of the mobile terminal are required. Virtual anchors and the unknown position of the node are tracked over time using different filters, e.g., belief propagation \cite{LMMWH2016} or the extended Kalman filter \cite{KLMHW2016}. 
The key difference of mmWave MIMO schemes in comparison with the works in \cite{LMMWH2016,KLMHW2016,WMLSGTHDMCW2016,GJWZDF2016,GMMUJD2016} is that they do not rely on the mobility of the mobile terminal to harness information from NLOS components. A snapshot (one transmission burst from the base station) is sufficient to exploit the information which NLOS components provide.

\subsection{Contribution and Paper Organization}
In \cite{SZASW2017,SGDSW2017,GGD2017}, it was numerically shown that position and orientation estimation accuracy can benefit from NLOS components. However, the influence of the location of the base station, mobile terminal, and points of incidence of NLOS components is not well understood. The convoluted structure of the FIM of the channel parameters makes the analysis of the impact of NLOS components complicated. In our work, we build upon \cite{SZASW2017} and employ a simplified FIM of the channel parameters. Using a geometric transformation like in \cite{SZASW2017,SGDSW2017}, we obtain a simplified FIM in the position, orientation, and points of incidence domain. In order to study the impact of NLOS components on the position and orientation estimation accuracy, we employ the notion of the equivalent FIM (EFIM) \cite{SW2007}. Firstly, we determine the EFIM of the position and orientation. Then, we decompose this EFIM in order to analyze and reveal the effect of NLOS components. Our contributions are summarized as follows:
\begin{itemize}
	\item  Assuming a large number of receive and transmit antennas as well as a large bandwidth, we derive an expression for the EFIM of the position and orientation, and we show that this EFIM can be written as the sum of rank one matrices, where each NLOS component contributes a distinct rank one matrix. 
	\item We show that each NLOS component contributes position and orientation information to the EFIM which reduces the PEB and OEB. We show that NLOS components provide significant position and orientation information if and only if angle-of-arrival (AOA), angle-of-departure (AOD), and time-of-arrival (TOA) can be estimated accurately.	
	\item We derive the amount and direction of information in a closed form showing its relation to the geometry.
\end{itemize}
The rest of the paper is organized as follows. Section \ref{sec:sys_model} discusses our system model, and section \ref{sec:FIM} reviews the simplified FIM of the channel parameters from \cite{SZASW2017}. Our main results are presented in section \ref{sec:EFIM}, where we derive the EFIM of the position and orientation, decomposition the EFIM, and show the information gain of NLOS components. Section \ref{sec:numerical_example} contains numerical examples. The paper is concluded in section \ref{sec:conclusion}.

\textit{Notation}: Throughout this paper, we will stick to the following notational conventions. Scalars are denoted in italic, e.g. $x$.	Lower case boldface indicates a column vector, e.g. $\vec{x}$, while upper case boldface denotes a matrix, e.g. $\vec{X}$. Matrix elements are denoted by $[\vec{X}]_{i,j}$ where $i$ refers to rows and $j$ refers to columns, while $[\vec{X}]_{i:l,j:k}$ selects the sub-matrix of $\vec{X}$ between the rows $i$ to $l$ and the columns $j$ to $k$.  Matrix transpose is indicated by superscript $\mathrm{T}$, e.g. $\vec{X}^{\mathrm{T}}$, while the superscript $\mathrm{H}$ refers to the transpose conjugate complex. Matrix trace is expressed by $\tr(\vec{X})$ and matrix determinant is indicated as $|\vec{X}|$. The Euclidean norm is denoted by $\left\|\cdot\right\|$, e.g. $\left\|\vec{x}\right\|$.

\section{System Model}
\label{sec:sys_model}
In this section, we first describe the geometry of the considered problem. Secondly, we specify the transmitter and the channel models. We conclude the section with the model of the receiver.

\subsection{Geometry}
\begin{figure}[t]%
\centering
\includegraphics[width=0.6\columnwidth]{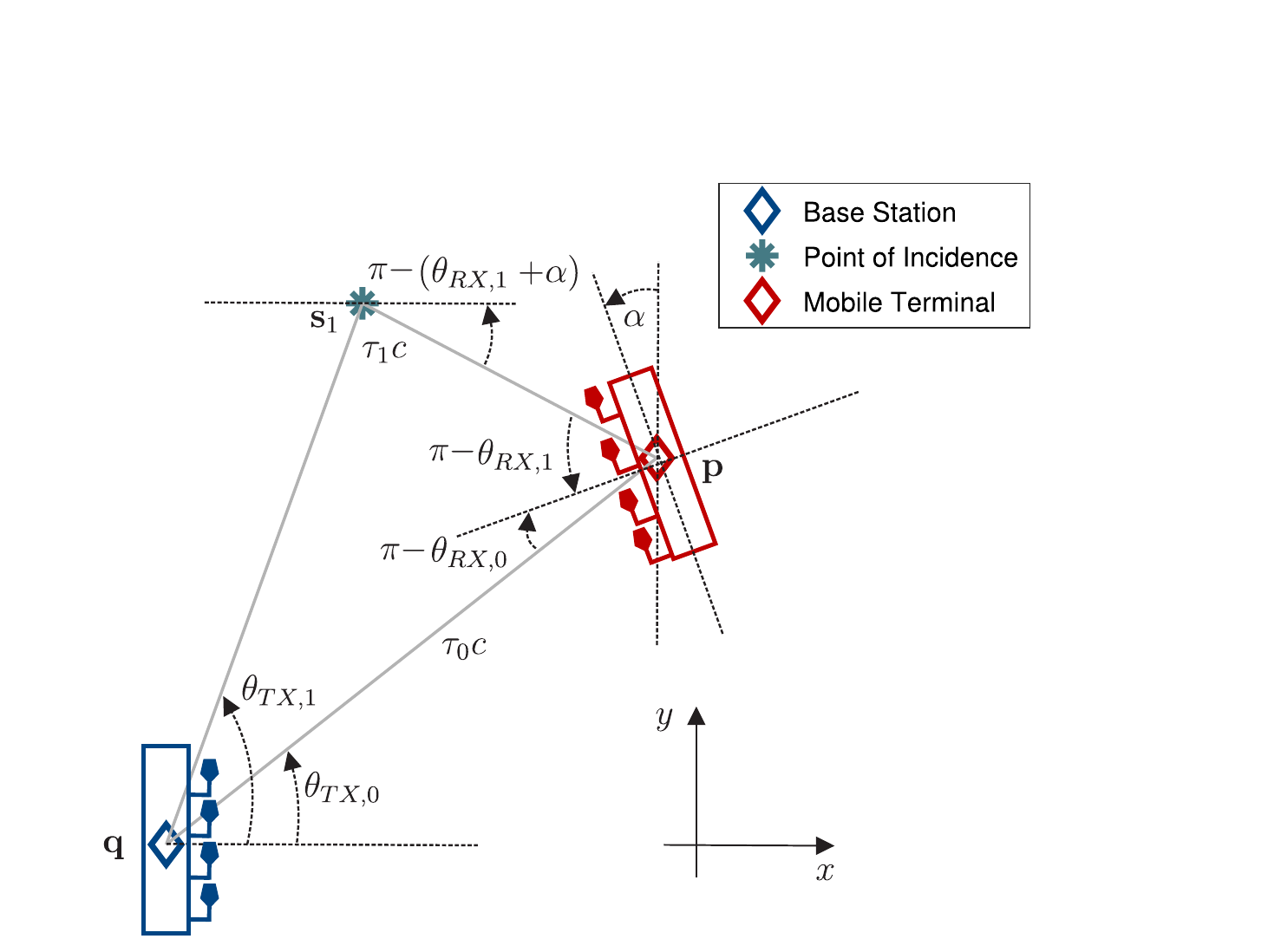}%
\caption{\textit{Geometry of the scenario -} A mobile terminal with unknown position and orientation attempts to localize itself and determine its orientation using the signal received from a base station. The base station has known location and orientation. Single-bounce NLOS paths and a direct path are considered.}%
\label{fig:system_geometry}%
\end{figure}
We consider a mobile terminal which aims to estimate its own location and orientation in 2D space, based on the downlink signal received from the base station. The position and orientation of the base station are perfectly known to the mobile terminal. We assume that mobile terminal and base station are synchronized\footnote{The synchronization assumption can be removed by considering a two-way protocol \cite{LS2002,ZLK2007}.}. An illustration of the scenario is depicted in Fig. \ref{fig:system_geometry}. The base station and mobile terminal are equipped with an array of $N_{\mathrm{TX}}$ transmit antennas and $N_{\mathrm{RX}}$ receive antennas, respectively. The array of the base station has arbitrary but known geometry. The orientation of the base station array is denoted by $\phi$. The centroid of the base station array is located at the position $\vec{q}=[q_{\mathrm{x}}, q_{\mathrm{y}}]^{\mathrm{T}}$. The centroid of the array of the mobile terminal\footnote{From now onwards, we will treat the centroid of the array of the base station and mobile terminal as the position of the base station and mobile terminal, respectively.} is located at $\vec{p}=[p_{\mathrm{x}}, p_{\mathrm{y}}]^{\mathrm{T}}$. We assume that its array geometry is known while the orientation of the array $\alpha$ is unknown. 

\subsection{Transmitter Model}
We consider mmWave in combination with massive MIMO. In particular, the transmitter transmits $\tilde{\vec{s}}(t)\triangleq\sqrt{E_{\mathrm{s}}}\vec{F}\vec{s}(t)$, where $E_{\mathrm{s}}$ denotes the energy per symbol, $\vec{F}\triangleq[\vec{f}_1,\vec{f}_2,...,\vec{f}_{{N}_{\mathrm{B}}}]$ is a precoding matrix with $N_{\mathrm{B}}$ simultaneously transmitted beams, and $\vec{s}(t)\triangleq [s_1(t),...,s_{{N}_{\mathrm{B}}}(t)]^{\mathrm{T}}$ is the vector of pilot signals. The pilot signal of the $l^{\text{th}}$ beam is given by
\begin{equation}
s_{l}(t)\triangleq\sum_{m=0}^{N_{s}-1} d_{l,m}p(t-mT_{\mathrm{s}}),
\label{eq:pilot_symbols}
\end{equation}
where $N_s$ denotes the number of pilot symbols per beam, $T_s$ is the symbol duration, $d_{l,m}$ are independent and identically distributed (IID) unit energy pilot symbols with zero mean which are transmitted over the $l^{\text{th}}$ beam with the unit-energy pulse $p(t)$. 
The $l^{\text{th}}$ column of $\vec{F}$ contains a directional beam pointing towards the azimuth angle $\theta_{\mathrm{BF},l}$ 
\begin{equation}
\vec{f}_l(\theta_{\mathrm{BF},l})\triangleq\frac{1}{\sqrt{N_{\mathrm{B}}}}\vec{a}_{\mathrm{TX},l}(\theta_{\mathrm{BF},l}),
\label{eq:individual_beam}
\end{equation}
where $\vec{a}_{\mathrm{TX},l}$ is the unit-norm array response vector given by \cite{VT2002}
\begin{equation}
\vec{a}_{\mathrm{TX},l}(\theta_{\mathrm{TX},l})\triangleq \frac{1}{\sqrt{N_{\mathrm{TX}}}}\exp(-j\boldsymbol\Delta_{\mathrm{TX}}^{\mathrm{T}}\vec{k}(\theta_{\mathrm{TX},l})), 
\label{eq:antenna_TX_reponse}
\end{equation}
where $\vec{k}(\theta_{\mathrm{TX},l})=\frac{2\pi}{\lambda}[\cos(\theta_{\mathrm{TX},l}),\sin(\theta_{\mathrm{TX},l})]^{\mathrm{T}}$ is the wavenumber vector, $\lambda$ is the wavelength, $\boldsymbol\Delta_{\mathrm{TX}}\triangleq[\vec{u}_{\mathrm{TX},1},\vec{u}_{\mathrm{TX},2},...,\vec{u}_{\mathrm{TX},N_{\mathrm{TX}}}]$ is a $2\times N_{\mathrm{TX}}$ matrix which contains the positions of the transmit antenna elements in 2D Cartesian coordinates in its columns, i.e. the $n^{\text{th}}$ column of $\boldsymbol\Delta_{\mathrm{TX}}$ is given by $\vec{u}_{\mathrm{TX},n}\triangleq [x_{\mathrm{TX},n},y_{\mathrm{TX},n}]^{\mathrm{T}}$.
To normalize the transmitted power, we set $\tr\left(\vec{F}^{\mathrm{H}}\vec{F}\right)=1$ and $\mathbb{E}\left\{\vec{s}(t)\vec{s}(t)^{\mathrm{H}}\right\}=\vec{I}_{N_{\mathrm{B}}}$, where $\vec{I}_{N_{\mathrm{B}}}$ is the $N_{\mathrm{B}}$-dimensional identity matrix.
\subsection{Channel Model}
We assume $K\geq 1$ distinct paths between the base station and the mobile terminal. Using mmWave massive MIMO, the number of paths is small \cite{PK2011}. The line-of-sight (LOS) path - if it exists - is denoted by $k=0$, while $k>0$ correspond to NLOS components. Due to the high path loss and the high directionality of the transmitted beams, NLOS components are assumed to originate from single-bounce scattering\footnote{Scatterers are objects that are much smaller than the wavelength of the signal.} or reflection\footnote{Reflectors are objects with a specific reflection point that are much larger than the wavelength of the signal.} only \cite{SGDSW2017,HSZWM2016,GGD2017,DS2014}. We denote the reflecting point and the location of the scatterer by the \textit{point of incidence} $\vec{s}_k=[s_{\mathrm{x},k},s_{\mathrm{y},k}]^{\mathrm{T}}$. Considering Fig. \ref{fig:system_geometry} it can be seen that each path is associated with three distinct channel parameters, namely AOA, AOD, and TOA, where AOA, AOD, and TOA of the $k^{\text{th}}$ path are denoted by $\theta_{\mathrm{RX},k}$, $\theta_{\mathrm{TX},k}$, and $\tau_k$, respectively. Assuming a narrow-band array model\footnote{We assume that $A_{\text{max}}<<c/B$, where $A_{\text{max}}$ is the maximum array aperture size, $c$ is the speed of light, and $B$ is the system bandwidth.}, the channel impulse response is given by
\begin{equation}
\vec{H}(t)=\sum_{k=0}^{K-1}\underbrace{\sqrt{N_{\mathrm{RX}}N_{\mathrm{TX}}}h_k\vec{a}_{\mathrm{RX},k}(\theta_{\mathrm{RX},k})\vec{a}_{\mathrm{TX},k}^{H}(\theta_{\mathrm{TX},k})}_{\vec{H}_k} \delta\left(t-\tau_k\right),
\label{eq:channel}
\end{equation}
where $h_{k}=h_{\mathrm{R},k}+j h_{\mathrm{I},k}$ is the complex path gain while $\vec{a}_{\mathrm{TX},k}(\theta_{\mathrm{TX},k})$ and $\vec{a}_{\mathrm{RX},k}(\theta_{\mathrm{RX},k})$ denote the unit-norm array response vectors of the $k^{\text{th}}$ path at the transmitter and receiver, respectively. Note that $\vec{a}_{\mathrm{TX},k}(\theta_{\mathrm{TX},k})$ is explicitly defined in \eqref{eq:antenna_TX_reponse}, while $\vec{a}_{\mathrm{RX},k}(\theta_{\mathrm{RX},k})$ can be defined analogously by \eqref{eq:antenna_TX_reponse} with matching subscripts.
\subsection{Receiver Model}
The noisy observed signal at the receiver is given by
\begin{equation}
\vec{r}(t)\triangleq \sum_{k=0}^{K-1} \sqrt{E_{\mathrm{s}}}\vec{H}_k \vec{F} \vec{s}(t-\tau_k)+\vec{n}(t), \quad t\in[0,N_sT_s],
\label{eq:receive_signal}
\end{equation}
where $\vec{n}(t)=[n_1(t),n(2),...,n_{N_{RX}}(t)]^{\mathrm{T}}$ is zero-mean additive white Gaussian noise (AWGN) with PSD $N_0$. Similar to \cite{FNBAS2010,VV2010}, we assume that a low-noise amplifier and a passband filter is attached to every receive antenna. This assumption might seem restrictive for the practical application, yet it simplifies the analysis of the EFIM. It can be regarded as the receiver architecture which results in the lowest PEB and OEB. 
\section{Fisher Information Matrix of the Channel Parameters}
\label{sec:FIM}
In this section, we first define the estimation problem and state the FIM of the channel parameters. We conclude the section with a brief summary of the results of \cite{SZASW2017}, which allow for a simplification of the FIM of the channel parameters.
\subsection{Definition}
We first we define the vector of channel parameters 
\begin{equation}
\boldsymbol\eta \triangleq [\boldsymbol\theta_{\mathrm{RX}}^{\mathrm{T}},\boldsymbol\theta_{\mathrm{TX}}^{\mathrm{T}},\boldsymbol\tau^{\mathrm{T}},\vec{h}_{\mathrm{R}}^{\mathrm{T}},\vec{h}_{\mathrm{I}}^{\mathrm{T}}]^{\mathrm{T}},
\label{eq:eta_channel}
\end{equation}
where we collect the AOAs, AODs, TOAs, and channel gains in the vectors 
\newline $\boldsymbol\theta_{\mathrm{RX}}\triangleq [\theta_{\mathrm{RX},0},\theta_{\mathrm{RX},1},...,\theta_{\mathrm{RX},K-1}]^{\mathrm{T}}$, $\boldsymbol\theta_{\mathrm{TX}}\triangleq [\theta_{\mathrm{TX},0},\theta_{\mathrm{TX},1},...,\theta_{\mathrm{TX},K-1}]^{\mathrm{T}}$, $\boldsymbol\tau\triangleq [\tau_0, \tau_1,...,\tau_{K-1}]^{\mathrm{T}}$, $\vec{h_{\mathrm{R}}}\triangleq [h_{\mathrm{R,0}},h_{\mathrm{R,1}},...,h_{\mathrm{R},K-1}]^{\mathrm{T}}$, and $\vec{h_{\mathrm{I}}}\triangleq [h_{\mathrm{I,0}},h_{\mathrm{I,1}},...,h_{\mathrm{I},K-1}]^{\mathrm{T}}$, respectively.	The corresponding FIM is given by
\begin{equation}
\vec{J}_{\boldsymbol\eta}\triangleq
\begin{bmatrix}
\vec{J}_{\boldsymbol\theta_{\mathrm{RX}}\boldsymbol\theta_{\mathrm{RX}}} & \vec{J}_{\boldsymbol\theta_{\mathrm{RX}}\boldsymbol\theta_{\mathrm{TX}}} & \cdots& \vec{J}_{\boldsymbol\theta_{\mathrm{RX}}\vec{h}_{\mathrm{I}}}\\
\vec{J}_{\boldsymbol\theta_{\mathrm{RX}}\boldsymbol\theta_{\mathrm{TX}}}^{\mathrm{T}} & \ddots & \cdots & \vdots \\
\vdots & \cdots &\ddots & \vdots \\
\vec{J}_{\boldsymbol\theta_{\mathrm{RX}}\vec{h}_{\mathrm{I}}}^{\mathrm{T}} & \cdots & \cdots & \vec{J}_{\vec{h}_{\mathrm{I}}\vec{h}_{\mathrm{I}}}\\
\end{bmatrix},
\label{eq:FIM_channel}
\end{equation}
where each entry of the FIM of the channel parameters can be computed according to\footnote{This result holds whenever the signal is observed under additive white Gaussian noise (AWGN).} \cite{K1993}
\begin{equation}
\left[\vec{J}_{\boldsymbol\eta}\right]_{u,v} \triangleq \frac{1}{N_0} \int_{0}^{N_{\mathrm{s}}T_{\mathrm{s}}}\mathbb{E}_{\mathrm{a}}\left[ \mathfrak{R}\left\{ \frac{\partial \boldsymbol\mu_{\boldsymbol\eta}^{\mathrm{H}}(t)}{\partial[\boldsymbol\eta]_u}\frac{\partial \boldsymbol\mu_{\boldsymbol\eta}(t)}{\partial[\boldsymbol\eta]_v}  \right\}\right] \text{d}t.
\label{eq:FIM_u_v_entry}
\end{equation}
In \eqref{eq:FIM_u_v_entry}, $\mathbb{E}_{\mathrm{a}}\left[\cdot\right]$ denotes the expectation with respect to the pilot symbols, $\mathfrak{R}\left\{\cdot\right\}$ is the real part of the argument, and $\boldsymbol\mu_{\boldsymbol\eta}(t)$ is defined as the noise-free observation
\begin{equation}
\boldsymbol\mu_{\boldsymbol\eta}(t)=\sum_{k=0}^{K-1} \sqrt{E_s}\vec{H}_k \vec{F} \vec{s}(t-\tau_k).
\label{eq:mu_vector}
\end{equation}
The FIM is related to the estimation error covariance matrix of any unbiased estimator via the information inequality \cite{A1993,TB2007,TBT2013}
\begin{equation}
	\mathbb{E}_a\left[ (\boldsymbol\eta-\hat{\boldsymbol\eta})(\boldsymbol\eta-\hat{\boldsymbol\eta})^{\mathrm{T}} \right] \succeq \vec{J}_{\boldsymbol\eta}^{-1},
\label{eq:information_inequality}
\end{equation}
where $\hat{\boldsymbol\eta}$ is the estimate of $\boldsymbol\eta$ and $\vec{A}\succeq \vec{B}$ is equivalent to $\vec{A}-\vec{B}$ being positive semi-definite. The inequality in \eqref{eq:information_inequality} is the well-known Cram\'er-Rao lower bound (CRLB).

\subsection{Simplification}
The blocks of the FIM in \eqref{eq:FIM_channel} obey certain scaling laws when the number of receive and transmit antennas, as well as the bandwidth become sufficiently large\footnote{It was shown in \cite{SZASW2017} that the approximation error of the PEB due to the simplification of the FIM is fairly small even under realistic assumption on the bandwidth ($B=125$ MHz) and the 2D array sizes ($N_{\mathrm{TX/RX}}=12\times 12$).}. In particular, it was shown in \cite[section III-B]{SZASW2017}  that some blocks can be well approximated by diagonal matrices, while others become zero matrices.
In the following, we provide a brief summary of the results from \cite{SZASW2017}. For more details, the reader is directly referred to \cite{SZASW2017}. 

Let $\vec{I}_{\mathrm{K}}$ and $\vec{0}_{\mathrm{K}}$ be the $K\times K$ identity and all-zeros matrix, respectively. We denote the Hadamard product by $\odot$, and make the following remarks:
\begin{enumerate}
	\item Since the AOAs of the different paths are assumed to be distinct, the steering vectors at the receiver do not interact considerably with each other, i.e. $\left\|\vec{a}^{\mathrm{H}}_{\mathrm{RX},u}\vec{a}_{\mathrm{RX},v}\right\|\ll \left\|\vec{a}^{\mathrm{H}}_{\mathrm{RX},u}\vec{a}^{\mathrm{H}}_{\mathrm{RX},u}\right\|,~ u\neq v$. Hence AOAs can be estimated independently and $\vec{J}_{\boldsymbol\theta_{\mathrm{\mathrm{RX}}}\boldsymbol\theta_{\mathrm{\mathrm{RX}}}}$ becomes diagonal, i.e. $\tilde{\vec{J}}_{\boldsymbol\theta_{\mathrm{\mathrm{RX}}}\boldsymbol\theta_{\mathrm{\mathrm{RX}}}}\approx \vec{I}_{\mathrm{K}} \odot \vec{J}_{\boldsymbol\theta_{\mathrm{\mathrm{RX}}}\boldsymbol\theta_{\mathrm{\mathrm{RX}}}}.$
	\item  The spatial cross-correlation of the transmitted beams decreases when the number of transmit antennas increases because the beams become narrower. Hence AODs can be estimated independently and $\vec{J}_{\boldsymbol\theta_{\mathrm{TX}}\boldsymbol\theta_{\mathrm{TX}}}$ becomes diagonal, i.e. $\tilde{\vec{J}}_{\boldsymbol\theta_{\mathrm{TX}}\boldsymbol\theta_{\mathrm{TX}}}\approx \vec{I}_{\mathrm{K}} \odot \vec{J}_{\boldsymbol\theta_{\mathrm{TX}}\boldsymbol\theta_{\mathrm{TX}}}.$
	\item  The NLOS cross-correlation vanishes as the bandwidth of the signal becomes large
since the paths can be resolved independently in time. Hence TOAs can be estimated independently and $\vec{J}_{\boldsymbol\tau\boldsymbol\tau}$ becomes diagonal, i.e. $\tilde{\vec{J}}_{\boldsymbol\tau\boldsymbol\tau}\approx \vec{I}_{\mathrm{K}} \odot \vec{J}_{\boldsymbol\tau\boldsymbol\tau}.$
	\item As a consequence of the previous results, the channel gains can be estimated independently and $\vec{J}_{\vec{h}_{\mathrm{R}}\vec{h}_{\mathrm{R}}}$, as well as $\vec{J}_{\vec{h}_{\mathrm{I}}\vec{h}_{\mathrm{I}}}$ become diagonal, i.e. $\tilde{\vec{J}}_{\vec{h}_{\mathrm{R}}\vec{h}_{\mathrm{R}}}\approx \vec{I}_{\mathrm{K}} \odot \vec{J}_{\vec{h}_{\mathrm{R}}\vec{h}_{\mathrm{R}}}$ and 
	\newline $\tilde{\vec{J}}_{\vec{h}_{\mathrm{I}}\vec{h}_{\mathrm{I}}}\approx \vec{I}_{\mathrm{K}} \odot \vec{J}_{\vec{h}_{\mathrm{I}}\vec{h}_{\mathrm{I}}}$.
	\item All off-diagonal blocks, except for $\vec{J}_{\boldsymbol\theta_{\mathrm{TX}}\vec{h}_{\mathrm{R}}}$ and $\vec{J}_{\boldsymbol\theta_{\mathrm{TX}}\vec{h}_{\mathrm{I}}}$, in \eqref{eq:FIM_channel} become zero. It was shown in \cite{SZASW2017} that the real and imaginary part of the $k^{\mathrm{th}}$ channel gain couple only with the AOD of the $k^{\mathrm{th}}$ path, i.e.  $\tilde{\vec{J}}_{\boldsymbol\theta_{\mathrm{TX}}\vec{h}_{\mathrm{R}}}\approx \vec{I}_{\mathrm{K}} \odot \vec{J}_{\boldsymbol\theta_{\mathrm{TX}}\vec{h}_{\mathrm{R}}}$ and $\tilde{\vec{J}}_{\boldsymbol\theta_{\mathrm{TX}}\vec{h}_{\mathrm{I}}}\approx \vec{I}_{\mathrm{K}} \odot \vec{J}_{\boldsymbol\theta_{\mathrm{TX}}\vec{h}_{\mathrm{I}}}.$ 
\end{enumerate}
Thus, when the bandwidth of the signal is large and number of receive and transmit antennas is also large, $\vec{J}_{\boldsymbol\eta}$ can be well approximated by 
\begin{equation}
\tilde{\vec{J}}_{\boldsymbol\eta}
\triangleq
\begin{bmatrix}
\tilde{\vec{J}}_{\boldsymbol\theta_{\mathrm{RX}}\boldsymbol\theta_{\mathrm{RX}}} & \vec{0}_{\mathrm{K}} & \vec{0}_{\mathrm{K}}&  \vec{0}_{\mathrm{K}} & \vec{0}_{\mathrm{K}}\\
\vec{0}_{\mathrm{K}} & \tilde{\vec{J}}_{\boldsymbol\theta_{\mathrm{TX}}\boldsymbol\theta_{\mathrm{TX}}} & \vec{0}_{\mathrm{K}} & \tilde{\vec{J}}_{\boldsymbol\theta_{\mathrm{TX}}\vec{h}_{\mathrm{R}}} & \tilde{\vec{J}}_{\boldsymbol\theta_{\mathrm{TX}}\vec{h}_{\mathrm{I}}} \\
\vec{0}_{\mathrm{K}} & \vec{0}_{\mathrm{K}} & \tilde{\vec{J}}_{\boldsymbol\tau\boldsymbol\tau} & \vec{0}_{\mathrm{K}} &\vec{0}_{\mathrm{K}} \\
\vec{0}_{\mathrm{K}} & \tilde{\vec{J}}_{\boldsymbol\theta_{\mathrm{TX}}\vec{h}_{\mathrm{R}}}^{\mathrm{T}} & \vec{0}_{\mathrm{K}}  & \tilde{\vec{J}}_{\vec{h}_{\mathrm{R}}\vec{h}_{\mathrm{R}}} &\vec{0}_{\mathrm{K}} \\
\vec{0}_{\mathrm{K}} & \tilde{\vec{J}}_{\boldsymbol\theta_{\mathrm{TX}}\vec{h}_{\mathrm{I}}}^{\mathrm{T}} & \vec{0}_{\mathrm{K}} & \vec{0}_{\mathrm{K}} &\tilde{\vec{J}}_{\vec{h}_{\mathrm{I}}\vec{h}_{\mathrm{I}}} \\
\end{bmatrix}.
\label{eq:simplfied_FIM}
\end{equation}

\section{Fisher Information Matrix of the Position-related Parameters}
\label{sec:EFIM}
Motivated by the findings of the previous subsection, we first reorder the parameters of the simplified FIM $\tilde{\vec{J}}_{\boldsymbol\eta}$ in \eqref{eq:simplfied_FIM}. Subsequently, we transform the resulting FIM to the position, orientation, and point of incidence domain. Then, we determine the EFIM of the position and orientation, which we decompose to analyze the impact of NLOS paths.

For mathematical convenience, we reorder the parameter vector $\boldsymbol\eta$ as follows 
\begin{equation}
\breve{\boldsymbol\eta}\triangleq \left[\tau_0,\theta_{\mathrm{TX},0},h_{\mathrm{R,0}},h_{\mathrm{I,0}},\theta_{\mathrm{RX},0},...,\tau_{K-1},\theta_{\mathrm{TX},K-1},h_{\mathrm{R},K-1},h_{\mathrm{I},K-1},\theta_{\mathrm{RX},K-1}\right]^{\mathrm{T}}.
\label{eq:eta_tilde}
\end{equation}
Reordering the parameter vector of the FIM results in a permutation of the entries of the FIM in \eqref{eq:simplfied_FIM}. The reordered FIM is given by 
\begin{equation}
\vec{J}_{\breve{\boldsymbol\eta}} 
\triangleq
\vec{P}_{\pi}\tilde{\vec{J}}_{\boldsymbol\eta},
\label{eq:J_eta_bar}
\end{equation}
where $\vec{P}_{\pi}$ is a permutation matrix of size $5K \times 5K$ which is given by
\begin{equation}
\vec{P}_{\pi} \triangleq
[\vec{e}_{2K+1},\vec{e}_{K+1},\vec{e}_{3K+1},\vec{e}_{4K+1},\vec{e}_{1},\cdots,\vec{e}_{3K},\vec{e}_{2K},\vec{e}_{4K},\vec{e}_{5K},\vec{e}_{K}]^{\mathrm{T}},
\label{eq:permutation}
\end{equation}
where $\vec{e}_{k}$ denotes the $k^{\text{th}}$ 	unit vector of the standard basis. Hence the FIM of the channel parameters has the following structure
\begin{equation}
{\vec{J}}_{\breve{\boldsymbol\eta}}=\mathrm{blkdiag}\left({\vec{J}}_{\breve{\boldsymbol\eta}_1},...,{\vec{J}}_{\breve{\boldsymbol\eta}_{K-1}}\right).
\label{eq:}
\end{equation}
The FIM block of the $k^{\mathrm{th}}$ path ${\vec{J}}_{\breve{\boldsymbol\eta}_k}$ is given by
\begin{equation}
{\vec{J}}_{\breve{\boldsymbol\eta}_k}
\triangleq
\left[
\begin{array}{c@{}c@{}c}
1/\sigma^2_{\tau_k} & \mathbf{0} & \mathbf{0} \\
  \mathbf{0} & \underbrace{\left[\begin{array}{ccc}
                       1/\tilde{\sigma}^2_{\theta_{\mathrm{TX},k}} & b_{\mathrm{R},k} & b_{\mathrm{I},k}\\ 
                       b_{\mathrm{R},k} & 1/\sigma^2_{\mathrm{h}_{\mathrm{R},k}} & 0\\
                       b_{\mathrm{I},k} & 0 & 1/\sigma^2_{\mathrm{h}_{\mathrm{I},k}}\\
                      \end{array}\right]}_{\vec{J}_{\theta_{\mathrm{TX},k}\vec{h}_k}} & \mathbf{0}\\
\mathbf{0} & \mathbf{0} & 1/\sigma^2_{\theta_{\mathrm{RX},k}} \\
\end{array}\right],
\label{eq:J_k}
\end{equation}
where we abbreviate the diagonal entries of the reordered FIM matrix of the channel parameters by the respective $1/\sigma^2$-terms, while $b_{\mathrm{R},k}$ and $b_{\mathrm{I},k}$ denote the corresponding entries of $\tilde{\vec{J}}_{\boldsymbol\theta_{\mathrm{TX}}\vec{h}_{\mathrm{R}}}$ and $\tilde{\vec{J}}_{\boldsymbol\theta_{\mathrm{TX}}\vec{h}_{\mathrm{I}}}$, respectively. Since, for position and orientation estimation, we are mainly interested in the pairs of AOA, AOD, and TOA of every path, we combine the uncertainty of $h_{\mathrm{R,k}}$ and $h_{\mathrm{I,k}}$ in the AOD-related term. For that, we will use the notion of the equivalent FIM (EFIM) from \cite{SW2007}, to consider only the information concerning AOA, AOD, and TOA. The EFIM is a measure of the information corresponding to certain parameters, while accounting for uncertainties of other (unknown) parameters. 
\begin{definition}
Given a parameter vector $\boldsymbol\xi \triangleq [\boldsymbol\xi_1^{\mathrm{T}},\boldsymbol\xi_2^{\mathrm{T}}]^{\mathrm{T}}$ with corresponding FIM 
\begin{equation}
\vec{J}_{\boldsymbol\xi} \triangleq
\left[
\begin{array}{cc}
	\vec{J}_{\boldsymbol\xi_1\boldsymbol\xi_1} & \vec{J}_{\boldsymbol\xi_1\boldsymbol\xi_2}\\
	\vec{J}_{\boldsymbol\xi_1\boldsymbol\xi_2}^{\mathrm{T}} & \vec{J}_{\boldsymbol\xi_2\boldsymbol\xi_2}
\end{array}
\right],
\label{eq:EFIM_helper}
\end{equation}
the EFIM of $\boldsymbol\xi_1$ is obtained by 
\begin{equation}
\vec{J}_{\boldsymbol\xi_1}^{\mathrm{e}} \triangleq \vec{J}_{\boldsymbol\xi_1\boldsymbol\xi_1}-\vec{J}_{\boldsymbol\xi_1\boldsymbol\xi_2}\vec{J}_{\boldsymbol\xi_2\boldsymbol\xi_2}^{-1} \vec{J}_{\boldsymbol\xi_1\boldsymbol\xi_2}^{\mathrm{T}}.
\label{eq:def_EFIM}
\end{equation}
Intuitively, the fact that the parameters of $\boldsymbol\xi_2$ are not perfectly known, leads to a loss in information which is quantified by $\vec{J}_{\boldsymbol\xi_1\boldsymbol\xi_2}\vec{J}_{\boldsymbol\xi_2\boldsymbol\xi_2}^{-1} \vec{J}_{\boldsymbol\xi_1\boldsymbol\xi_2}^{\mathrm{T}}$. 
\end{definition}
Using \eqref{eq:J_k} and \eqref{eq:def_EFIM}, the EFIM of $\vec{J}_{\theta_{\mathrm{TX},k}\vec{h}_k}$ with respect to $\theta_{\mathrm{TX},k}$ is given by 
\begin{equation}
\vec{J}_{\theta_{\mathrm{TX},k}}^{\mathrm{e}} = \frac{1}{\tilde{\sigma}_{\theta_{\mathrm{TX,k}}}^2}-\left(b_{\mathrm{R,k}}^2\sigma_{h_{\mathrm{R,k}}}^2+b_{\mathrm{I,k}}^2\sigma_{h_{\mathrm{I,k}}}^2\right)\triangleq \frac{1}{\sigma_{\theta_{\mathrm{TX,k}}}^2}.
\label{eq:EFIM_theta_TX}
\end{equation}
Hence the (E)FIM\footnote{We slightly abuse the notation here since we denote the EFIM ${\vec{J}}_{\bar{\boldsymbol\eta}_k}^{\mathrm{e}}$ by ${\vec{J}}_{\bar{\boldsymbol\eta}_k}$ and call it FIM to avoid confusion with another EFIM later on.} of the $k^{\text{th}}$ path is given by ${\vec{J}}_{\bar{\boldsymbol\eta}_k} \triangleq \mathrm{diag} \left(1/\sigma_{\tau_{\mathrm{k}}}^2 ,1/\sigma_{\theta_{\mathrm{TX,k}}}^2,1/\sigma_{\theta_{\mathrm{TX,k}}}^2\right)$, where $\bar{\boldsymbol\eta}_k= [\tau_{\mathrm{k}}, \theta_{\mathrm{TX,k}},\theta_{\mathrm{\mathrm{RX},k}}]^{\mathrm{T}}$. Consequently, the FIM of all paths is given by
\begin{equation}
{\vec{J}}_{\bar{\boldsymbol\eta}} = \mathrm{blkdiag}\left({\vec{J}}_{\bar{\boldsymbol\eta}_1},...,{\vec{J}}_{\bar{\boldsymbol\eta}_{K-1}}\right).
\label{eq:FIM_diagonal_eta_bar}
\end{equation}
Each $\sigma^2$-term reflects the quality of the parameter estimation of the respective parameter, e.g. large $\sigma^2_{\tau_0}$ means that the TOA of the first path cannot be estimated accurately. Note that the $\sigma^2$-terms in \eqref{eq:FIM_diagonal_eta_bar} depend on the number of antennas, beamforming, bandwidth, and receiver location, as evaluated and analyzed in \cite{SZASW2017}.

Finally, we transform the FIM in \eqref{eq:FIM_diagonal_eta_bar} to the position, orientation, and point of incidence domain using the geometric relationships between the channel parameters $\bar{\boldsymbol\eta}=[\bar{\boldsymbol\eta}_0,...,\bar{\boldsymbol\eta}_{K-1}]$ and $\tilde{\boldsymbol\eta}\triangleq  [\vec{p}^{\mathrm{T}},\alpha,\vec{s}_1^{\mathrm{T}},...,\vec{s}_{K-1}^{\mathrm{T}}]^{\mathrm{T}}$. In particular, the FIM of the position-related parameters is given by \cite{K1993}
\begin{equation}
\vec{J}_{\tilde{\boldsymbol\eta}} \triangleq \vec{T}\vec{J}_{\bar{\boldsymbol\eta}} \vec{T}^{\mathrm{T}},
\label{eq:J_eta_tilde}
\end{equation}
where $\vec{T}\triangleq \frac{\partial\bar{\boldsymbol\eta}^{\mathrm{T}}}{\partial\tilde{\boldsymbol\eta}}$. 

\textit{Remark}: We normalize the units of the position-related and orientation-related entries in the transformation matrix by $1$ m and $1$ rad, respectively, in order to avoid a mixed-units FIM of the position, orientation, and points of incidence after the transformation. After the (normalized) transformation, the FIM of the position, orientation, and points of incidence will be dimensionless, which allows us to employ the standard inner product (dot product) to define the norm of a vector. For notational convenience, we omit the normalization constants ($1$ m and $1$ rad).
\begin{lemma}
The transformation matrix $\vec{T}$ is an upper triangle block-matrix
\begin{equation}
\vec{T} \triangleq
\left[
\begin{array}{c|cccc}
\vec{T}_{\vec{P}}^{(0)} & 
\vec{T}_{\vec{P}}^{(1)} & 
\vec{T}_{\vec{P}}^{(2)} & 
\cdots &
\vec{T}_{\vec{P}}^{(K-1)} \\
  \hline

\vec{0}_{2\times3} & 
\vec{T}_{\vec{S}_1} & 
\vec{0}_{2\times3} & 
\cdots& 
\vec{0}_{2\times3} \\
\vec{0}_{2\times3} & 
\vec{0}_{2\times3}  & 
\vec{T}_{\vec{S}_2} & 
\ddots &
\vec{0}_{2\times3} \\

\vdots & 
\vdots & 
\ddots& 
\ddots &
\vdots \\

\vec{0}_{2\times3}  & 
\vec{0}_{2\times3}  & 
\vec{0}_{2\times3} & 
\cdots&
\vec{T}_{\vec{S}_{K-1}}\\
\end{array}
\right]
\triangleq
\left[
\begin{array}{c|c}
\vec{A} & \vec{B} \\
\hline
\vec{0} & \vec{D}
\end{array}
\right],
\label{eq:T_matrix}
\end{equation}
where \cite{SGDSW2017}
\begin{equation}
\vec{T}_{\vec{P}}^{(k)}=
\left[
\begin{array}{ccc}
\partial\tau_k/\partial\vec{p} & \partial\theta_{\mathrm{TX},k}/\partial\vec{p} &\partial\theta_{\mathrm{RX},k}/\partial\vec{p} \\ 
\partial\tau_k/\partial\alpha & \partial\theta_{\mathrm{TX},k}/\partial\alpha &\partial\theta_{\mathrm{RX},k}/\partial\alpha \\ 
\end{array}
\right]
\label{eq:P_P_i}
\end{equation}
and 
\begin{equation}
\vec{T}_{\vec{s}_k}=
\left[
\begin{array}{ccc}
\partial\tau_k/\partial\vec{s}_k & \partial\theta_{\mathrm{TX},k}/\partial\vec{s}_k &\partial\theta_{\mathrm{RX},k}/\partial\vec{s}_k \\ 
\end{array}
\right].
\label{eq:P_P_sk}
\end{equation}
\label{lemma:T}
The individual elements are summarized in Appendix \ref{sec:appendix_A}. 
\end{lemma}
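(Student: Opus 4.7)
The plan is to establish the block structure of $\vec{T}=\partial\bar{\boldsymbol\eta}^{\mathrm{T}}/\partial\tilde{\boldsymbol\eta}$ by reading off, geometrically, which entries of $\tilde{\boldsymbol\eta}=[\vec{p}^{\mathrm{T}},\alpha,\vec{s}_1^{\mathrm{T}},\dots,\vec{s}_{K-1}^{\mathrm{T}}]^{\mathrm{T}}$ actually appear in each channel parameter in $\bar{\boldsymbol\eta}=[\bar{\boldsymbol\eta}_0,\dots,\bar{\boldsymbol\eta}_{K-1}]$. Given that $\vec{T}$ is a Jacobian (in denominator layout) of size $(2K+1)\times 3K$, showing that certain $2\times 3$ sub-blocks vanish is equivalent to showing that the corresponding channel parameters do not depend on the associated subset of $\tilde{\boldsymbol\eta}$.

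First, I would write each channel parameter explicitly as a function of geometry. The LOS path ($k=0$) has $\tau_0=\lVert\vec{p}-\vec{q}\rVert/c$, and $\theta_{\mathrm{TX},0}$, $\theta_{\mathrm{RX},0}$ depend only on the bearing between $\vec{p}$ and $\vec{q}$ (the latter offset by the unknown orientation $\alpha$). Crucially, none of the $\vec{s}_k$ enter, which forces the $(k=0)$-column-block of every $\vec{s}_k$-row-block of $\vec{T}$ to be zero. For an NLOS path $k\geq 1$, the single-bounce assumption built into \eqref{eq:channel} implies $\tau_k=(\lVert\vec{q}-\vec{s}_k\rVert+\lVert\vec{s}_k-\vec{p}\rVert)/c$, while $\theta_{\mathrm{TX},k}$ is the bearing from $\vec{q}$ to $\vec{s}_k$ (minus $\phi$) and $\theta_{\mathrm{RX},k}$ is the bearing from $\vec{p}$ to $\vec{s}_k$ (minus $\alpha$). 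Hence $\bar{\boldsymbol\eta}_k$ depends on $\vec{p}$, $\alpha$ and \emph{only} $\vec{s}_k$, never on $\vec{s}_j$ for $j\neq k$.

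From these two observations the block pattern of \eqref{eq:T_matrix} follows immediately. The $(\vec{p},\alpha)$-row-block (three rows) is generically dense across all $K$ column-blocks because $(\vec{p},\alpha)$ enters every path; the $k$-th such $3\times 3$ sub-block is precisely $\vec{T}_{\vec{P}}^{(k)}$ in \eqref{eq:P_P_i}. The $\vec{s}_k$-row-block (two rows) is non-zero only in the column-block corresponding to $\bar{\boldsymbol\eta}_k$, where it equals $\vec{T}_{\vec{s}_k}$ in \eqref{eq:P_P_sk}. Stacking these row-blocks and grouping as $\vec{A}=\vec{T}_{\vec{P}}^{(0)}$, $\vec{B}=[\vec{T}_{\vec{P}}^{(1)},\dots,\vec{T}_{\vec{P}}^{(K-1)}]$, and $\vec{D}=\mathrm{blkdiag}(\vec{T}_{\vec{s}_1},\dots,\vec{T}_{\vec{s}_{K-1}})$ produces exactly the upper-triangular form claimed.

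The remaining task is to compute each entry of $\vec{T}_{\vec{P}}^{(k)}$ and $\vec{T}_{\vec{s}_k}$, which reduces to routine differentiation of Euclidean norms and two-argument arctangents and would be deferred verbatim to Appendix \ref{sec:appendix_A}. There is no genuinely hard step: the only substantive point is recognising that the block-diagonality of $\vec{D}$ is not a convenience but a direct consequence of the single-bounce model in \eqref{eq:channel}; under multi-bounce propagation each $\bar{\boldsymbol\eta}_k$ would depend on several points of incidence and $\vec{D}$ would no longer be block-diagonal. Highlighting this dependence explicitly is the main conceptual content of the lemma.
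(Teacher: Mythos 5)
Your proposal is correct and matches the paper's (implicit) reasoning: the paper offers no explicit proof of this lemma, simply asserting the block structure from the geometric dependency pattern and deferring the individual derivatives to Appendix \ref{sec:appendix_A}, which is exactly what you do. The only minor imprecision is calling the $(\vec{p},\alpha)$-row-block ``generically dense'': for $k>0$ the column $\partial\theta_{\mathrm{TX},k}/\partial(\vec{p},\alpha)$ is actually zero since the AOD of an NLOS path is fixed by $\vec{q}$ and $\vec{s}_k$ alone, but this does not affect the claimed block structure.
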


\begin{lemma}
\label{lemma:J_p_alpha}
The FIM of the position-related parameters is given by
\begin{equation}
\vec{J}_{\tilde{\boldsymbol\eta}} =
\left[
\begin{array}{c|c}
\vec{A}[\vec{J}_{{\bar{\boldsymbol\eta}}}]_{1:3,1:3}\vec{A}^{\mathrm{T}} +\vec{B}[\vec{J}_{{\bar{\boldsymbol\eta}}}]_{4:3K,4:3K}\vec{B}^{\mathrm{T}} &
\vec{B}[\vec{J}_{{\bar{\boldsymbol\eta}}}]_{4:3K,4:3K}\vec{D}^{\mathrm{T}} \\
\hline

\vec{D}[\vec{J}_{{\bar{\boldsymbol\eta}}}]_{4:3K,4:3K}\vec{B}^{\mathrm{T}}&
\vec{D}[\vec{J}_{{\bar{\boldsymbol\eta}}}]_{4:3K,4:3K}\vec{D}^{\mathrm{T}} \\
\end{array}
\right].
\label{eq:FIM_pos}
\end{equation}
\label{lemma:FIM_pos}
\end{lemma}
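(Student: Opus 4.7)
The plan is to obtain \eqref{eq:FIM_pos} by a direct block-matrix evaluation of the transformation formula $\vec{J}_{\tilde{\boldsymbol\eta}} = \vec{T}\vec{J}_{\bar{\boldsymbol\eta}}\vec{T}^{\mathrm{T}}$ from \eqref{eq:J_eta_tilde}, exploiting the two structural facts already at our disposal: first, the block-upper-triangular form of $\vec{T}$ established in Lemma \ref{lemma:T}; second, the block-diagonal form of $\vec{J}_{\bar{\boldsymbol\eta}}$ in \eqref{eq:FIM_diagonal_eta_bar}, which follows from the diagonal simplified FIM of the channel parameters in \eqref{eq:simplfied_FIM} after applying the EFIM reduction in \eqref{eq:EFIM_theta_TX}.

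First I would partition $\vec{J}_{\bar{\boldsymbol\eta}}$ conformably with $\vec{T}=\bigl[\begin{smallmatrix} \vec{A} & \vec{B} \\ \vec{0} & \vec{D} \end{smallmatrix}\bigr]$. Because $\vec{J}_{\bar{\boldsymbol\eta}}$ is block diagonal with the first $3\times 3$ block containing the LOS triple $(\tau_0,\theta_{\mathrm{TX},0},\theta_{\mathrm{RX},0})$ and the remaining $3(K-1)\times 3(K-1)$ block containing the NLOS triples, the off-diagonal parts of the partition vanish, yielding
\begin{equation}
\vec{J}_{\bar{\boldsymbol\eta}} =
\left[
\begin{array}{c|c}
[\vec{J}_{\bar{\boldsymbol\eta}}]_{1:3,1:3} & \vec{0} \\ \hline
\vec{0} & [\vec{J}_{\bar{\boldsymbol\eta}}]_{4:3K,4:3K}
\end{array}
\right].
\label{eq:plan_partition}
\end{equation}
This step is the key observation: the LOS path and the NLOS paths are decoupled in $\vec{J}_{\bar{\boldsymbol\eta}}$, so no cross-block terms between them will appear after the congruence.

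Next I would carry out the product in two stages. Computing $\vec{J}_{\bar{\boldsymbol\eta}}\vec{T}^{\mathrm{T}}$ first gives
\begin{equation}
\left[
\begin{array}{c|c}
[\vec{J}_{\bar{\boldsymbol\eta}}]_{1:3,1:3}\vec{A}^{\mathrm{T}} & \vec{0} \\ \hline
[\vec{J}_{\bar{\boldsymbol\eta}}]_{4:3K,4:3K}\vec{B}^{\mathrm{T}} & [\vec{J}_{\bar{\boldsymbol\eta}}]_{4:3K,4:3K}\vec{D}^{\mathrm{T}}
\end{array}
\right],
\end{equation}
where the top-right block vanishes because the NLOS block of $\vec{J}_{\bar{\boldsymbol\eta}}$ multiplies the zero block in the first block-column of $\vec{T}^{\mathrm{T}}$. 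Pre-multiplying by $\vec{T}$ then gives, block by block: the $(1,1)$ entry of \eqref{eq:FIM_pos} from $\vec{A}[\vec{J}_{\bar{\boldsymbol\eta}}]_{1:3,1:3}\vec{A}^{\mathrm{T}} + \vec{B}[\vec{J}_{\bar{\boldsymbol\eta}}]_{4:3K,4:3K}\vec{B}^{\mathrm{T}}$; the $(1,2)$ entry from $\vec{B}[\vec{J}_{\bar{\boldsymbol\eta}}]_{4:3K,4:3K}\vec{D}^{\mathrm{T}}$; the $(2,1)$ entry by symmetry (or direct calculation); and the $(2,2)$ entry from $\vec{D}[\vec{J}_{\bar{\boldsymbol\eta}}]_{4:3K,4:3K}\vec{D}^{\mathrm{T}}$, since all contributions from $[\vec{J}_{\bar{\boldsymbol\eta}}]_{1:3,1:3}$ are annihilated by the zero block of $\vec{T}$ in its second block-row.

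There is essentially no hard step here: the result is pure block-matrix bookkeeping that rests on (a) the LOS/NLOS decoupling in \eqref{eq:plan_partition}, and (b) the upper-triangular sparsity of $\vec{T}$ imposed by the fact that the point of incidence $\vec{s}_k$ influences only the channel parameters of its own path. If any issue arises, it would be in confirming the latter sparsity pattern, but this is exactly what Lemma \ref{lemma:T} asserts (each $\vec{T}_{\vec{s}_k}$ occupies only the rows corresponding to path $k$), so the block-triangular form of $\vec{T}$ may be invoked directly and the claim of Lemma \ref{lemma:J_p_alpha} follows.
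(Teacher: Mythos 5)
Your proposal is correct and follows exactly the route the paper takes: its proof of Lemma~\ref{lemma:FIM_pos} is the one-line instruction to evaluate $\vec{J}_{\tilde{\boldsymbol\eta}} = \vec{T}\vec{J}_{\bar{\boldsymbol\eta}}\vec{T}^{\mathrm{T}}$ from \eqref{eq:J_eta_tilde} using the block-triangular $\vec{T}$ of Lemma~\ref{lemma:T}, and your two-stage block multiplication, together with the observation that the block-diagonal $\vec{J}_{\bar{\boldsymbol\eta}}$ decouples the LOS block $[\vec{J}_{\bar{\boldsymbol\eta}}]_{1:3,1:3}$ from the NLOS block $[\vec{J}_{\bar{\boldsymbol\eta}}]_{4:3K,4:3K}$, is precisely that computation carried out explicitly.
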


\begin{proof}
Evaluate \eqref{eq:J_eta_tilde} using Lemma \ref{lemma:T}.
\end{proof}
Defining $\tilde{\boldsymbol\eta}_{\vec{p},\alpha}\triangleq[\vec{p}^{\mathrm{T}},\alpha]^{\mathrm{T}}$, we obtain the EFIM of the position and orientation using \eqref{eq:def_EFIM} and \eqref{eq:FIM_pos} from  Lemma \ref{lemma:FIM_pos}
\begin{equation}
\begin{array}{cl}
\vec{J}_{\tilde{\boldsymbol\eta}_{\vec{p},\alpha}}^{\mathrm{e}}=&\underbrace{\vec{A}[\vec{J}_{{\bar{\boldsymbol\eta}}}]_{1:3,1:3}\vec{A}^{\mathrm{T}}}_{\triangleq\tilde{\vec{A}}^{\mathrm{(G)}}\text{ - LOS info gain}} +\underbrace{\vec{B}[\vec{J}_{{\bar{\boldsymbol\eta}}}]_{4:3K,4:3K}\vec{B}^{\mathrm{T}}}_{\triangleq\tilde{\vec{B}}^{\mathrm{(G)}}\text{ - NLOS info gain}} \\
																	&-\underbrace{\vec{B}[\vec{J}_{{\bar{\boldsymbol\eta}}}]_{4:3K,4:3K}\vec{D}^{\mathrm{T}} (\vec{D}[\vec{J}_{{\bar{\boldsymbol\eta}}}]_{4:3K,4:3K}\vec{D}^{\mathrm{T}})^{-1}\vec{D}[\vec{J}_{{\bar{\boldsymbol\eta}}}]_{4:3K,4:3K}\vec{B}^{\mathrm{T}}}_{\triangleq\tilde{\vec{B}}^{\mathrm{(L)}}\text{ - NLOS info loss}}.
\end{array}
\label{eq:eq_FIM}
\end{equation}
It becomes evident that the EFIM is composed of three terms. The first term $\tilde{\vec{A}}^{\mathrm{(G)}}$ quantifies the information gain from the LOS path. The second term $\tilde{\vec{B}}^{\mathrm{(G)}}$ quantifies the information gain from the NLOS components. Finally, the third term $\tilde{\vec{B}}^{\mathrm{(L)}}$ specifies the loss of information which accounts for the fact that the points of incident of NLOS paths are unknown. Considering the structure of $\vec{J}_{\bar{\boldsymbol\eta}}$ and $\vec{T}$ the NLOS info gain and loss can be written as
\begin{equation}
\tilde{\vec{B}}^{\mathrm{(G)}}=\sum_{k=1}^{K-1} \vec{T}_{\vec{P}}^{(k)} \vec{J}_{\bar{\boldsymbol\eta}_k}\left(\vec{T}_{\vec{P}}^{(k)}\right)^{\mathrm{T}}
\label{eq:MP_gain_as_sum}
\end{equation}
and
\begin{equation}
\tilde{\vec{B}}^{\mathrm{(L)}}=\sum_{k=1}^{K-1} \vec{T}_{\vec{P}}^{(k)} \vec{J}_{\bar{\boldsymbol\eta}_k}\left(\vec{T}_{\vec{s}_k}\right)^{\mathrm{T}} \left(\vec{T}_{\vec{s}_k} \vec{J}_{\bar{\boldsymbol\eta}_k}\left(\vec{T}_{\vec{s}_k}\right)^{\mathrm{T}} \right)^{-1} \vec{T}_{\vec{s}_k} \vec{J}_{\bar{\boldsymbol\eta}_k}\left(\vec{T}_{\vec{P}}^{(k)}\right)^{\mathrm{T}},
\label{eq:MP_loss_as_sum}
\end{equation}
respectively. 

\textit{Interpretation}:
The term in \eqref{eq:MP_gain_as_sum} reflects the gain of information from NLOS components if the positions of the corresponding points of incident $\vec{s}_k$ were perfectly known. Since the position of each point of incidence has to be estimated, as well, the uncertainty regarding the position of the scatterer leads to a loss of information which is quantified in \eqref{eq:MP_loss_as_sum}. 

In the following, we decompose the terms $\vec{A}^{\mathrm{(G)}}$ and $\vec{B}^{\mathrm{(N)}}=\vec{B}^{\mathrm{(G)}}-\vec{B}^{\mathrm{(L)}}$ with the goal to express the EFIM in \eqref{eq:eq_FIM} as the sum of rank one matrices, where each rank one matrix is given by the outer-product of the unit-norm eigenvector $\vec{v}_k$ corresponding to the only non-zero eigenvalue $\lambda_k$ of the rank one matrix, i.e. $\vec{J}_{\tilde{\boldsymbol\eta}_{\vec{p},\alpha}}^{\mathrm{e}} = \sum_{k} \lambda_k \vec{v}_k \vec{v}_k^{\mathrm{T}}$. We show that each rank one matrix corresponding to a NLOS path is positive semi-definite meaning that each NLOS path improves the position and orientation estimation accuracy. 

For notational convenience, we define the following matrix template:
\begin{equation}
\boldsymbol\Upsilon_{n,m}(\theta,\phi,\rho)\triangleq
\left[
\begin{array}{ccc}
	\cos^2(\theta+\phi) & (-1)^{n}\sin(\theta)\cos(\theta) & (-1)^{m}\rho \sin(\theta) \\
	(-1)^{n}\sin(\theta)\cos(\theta) & \sin^2(\theta+\phi) & (-1)^{m+1}\rho \cos(\theta) \\
	(-1)^{m}\rho\sin(\theta) & (-1)^{m+1}\rho \cos(\theta) & \rho^2 
\end{array}
\right].
\label{eq:matrix_template}
\end{equation}
\subsection{LOS Information Gain} 
Using \eqref{eq:T_matrix}, the results from Appendix \ref{sec:appendix_A}, and by simple algebra, the first term $\vec{A}[\vec{J}_{{\boldsymbol\eta}}]_{1:3,1:3}\vec{A}^{\mathrm{T}}$ can be easily shown to be 
\begin{equation}
\begin{array}	{cl}
	\tilde{\vec{A}}^{\mathrm{(G)}}=& \underbrace{\frac{1}{\sigma^2_{\tau_0}c^2} \boldsymbol\Upsilon_{0,0}(\theta_{\mathrm{TX},0},0,0)}_{\tilde{\vec{A}}_R^{\mathrm{(G)}}}
	+\underbrace{\frac{1}{\sigma^2_{\theta_{\mathrm{TX},0}}\left\|\vec{p}-\vec{q}\right\|^2} \boldsymbol\Upsilon_{1,0}(\theta_{\mathrm{TX},0},\pi/2,0)}_{\tilde{\vec{A}}_D^{\mathrm{(G)}}}\\
	&+\underbrace{\frac{1}{\sigma^2_{\theta_{\mathrm{RX},0}}\left\|\vec{p}-\vec{q}\right\|^2} \boldsymbol\Upsilon_{1,0}(\theta_{\mathrm{TX},0},\pi/2,\left\|\vec{p}-\vec{q}\right\|)}_{\tilde{\vec{A}}_A^{\mathrm{(G)}}}.
\end{array}
\label{eq:EFIM_gain_LOS}
\end{equation}
\begin{proposition}
\label{prop:LOS}
The eigenvalues of the matrices $\tilde{\vec{A}}_R^{\mathrm{(G)}},\tilde{\vec{A}}_D^{\mathrm{(G)}}$, and $\tilde{\vec{A}}_A^{\mathrm{(G)}}$ are given by 
\begin{subequations}
\begin{alignat}{3}
\lambda_{R,0}^{\mathrm{(G)}}&=\frac{1}{\sigma^2_{\tau_0}c^2},
\label{eq:LOS_eval_R}
\\
\lambda_{D,0}^{\mathrm{(G)}}&=\frac{1}{\sigma^2_{\theta_{\mathrm{TX},0}}\left\|\vec{p}-\vec{q}\right\|^2},
\label{eq:LOS_eval_D}
\\
\lambda_{A,0}^{\mathrm{(G)}}&=\frac{\left\|\vec{p}-\vec{q}\right\|^2+1}{\sigma^2_{\theta_{\mathrm{RX},0}}\left\|\vec{p}-\vec{q}\right\|^2},
\label{eq:LOS_eval_A}
\end{alignat}
\label{eq:LOS_eval}
\end{subequations}
respectively. The corresponding eigenvectors are given by 
\begin{subequations}
\begin{alignat}{3}
\vec{v}_{R,0}^{\mathrm{(G)}}&=[\cos(\theta_{\mathrm{TX},0}), \sin(\theta_{\mathrm{TX},0}),0]^{\mathrm{T}},
\label{eq:LOS_evec_R}
\\
\vec{v}_{D,0}^{\mathrm{(G)}}&=[\sin(\theta_{\mathrm{TX},0}), \cos(\theta_{\mathrm{TX},0}),0]^{\mathrm{T}},
\label{eq:LOS_evec_D}
\\
\vec{v}_{A,0}^{\mathrm{(G)}}&=v_{A,0}^{\mathrm{(G)}}\left[\frac{\sin(\theta_{\mathrm{TX},0})}{\left\|\vec{p}-\vec{q}\right\|}, -\frac{\cos(\theta_{\mathrm{TX},0})}{\left\|\vec{p}-\vec{q}\right\|},1\right]^{\mathrm{T}},
\label{eq:LOS_evec_A}
\end{alignat}
\end{subequations}
where $v_{A,0}^{\mathrm{(G)}}=\sqrt{1/(1+\left\|\vec{p}-\vec{q}\right\|^2)}$ ensures that $\vec{v}_{A,0}^{\mathrm{(G)}}$ to is unit-norm.
It is immediately obvious that all eigenvalues are positive.
\end{proposition}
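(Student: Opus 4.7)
The plan is to show that each of $\tilde{\vec{A}}_R^{\mathrm{(G)}}$, $\tilde{\vec{A}}_D^{\mathrm{(G)}}$, and $\tilde{\vec{A}}_A^{\mathrm{(G)}}$ is a positive scalar multiple of a rank-one symmetric matrix $\vec{w}\vec{w}^{\mathrm{T}}$. Once this is established, the proposition reduces to the elementary spectral fact that the only non-zero eigenvalue of $\vec{w}\vec{w}^{\mathrm{T}}$ equals $\|\vec{w}\|^{2}$, with associated unit eigenvector $\vec{w}/\|\vec{w}\|$, while the two remaining eigenvalues vanish. Positivity is then immediate because the prefactors appearing in \eqref{eq:EFIM_gain_LOS} are strictly positive and $\|\vec{w}\|^{2}>0$.

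The factorization is read off from the template $\boldsymbol\Upsilon_{n,m}(\theta,\phi,\rho)$ by direct inspection in each case. For $\tilde{\vec{A}}_R^{\mathrm{(G)}}$, substituting $n=0$, $\phi=0$, $\rho=0$ collapses the template to a $2\times 2$ block bordered by zeros, which is the outer product of $[\cos\theta_{\mathrm{TX},0},\sin\theta_{\mathrm{TX},0}]^{\mathrm{T}}$ with itself; padding with a zero third coordinate yields $\vec{w}_{R}$ of unit norm and therefore an eigenvalue equal to the prefactor $1/(\sigma_{\tau_{0}}^{2}c^{2})$. For $\tilde{\vec{A}}_D^{\mathrm{(G)}}$ with $n=1$, $\phi=\pi/2$, $\rho=0$, the identities $\cos^{2}(\theta+\pi/2)=\sin^{2}\theta$ and $\sin^{2}(\theta+\pi/2)=\cos^{2}\theta$ swap the two diagonal entries while the factor $(-1)^{1}$ flips the sign of the $(1,2)$ off-diagonal; the same rank-one argument then produces a unit $\vec{w}_{D}$ whose first two components are $\pm\sin\theta_{\mathrm{TX},0}$ and $\mp\cos\theta_{\mathrm{TX},0}$. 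For $\tilde{\vec{A}}_A^{\mathrm{(G)}}$ with $n=1$, $\phi=\pi/2$, $\rho=\|\vec{p}-\vec{q}\|$, the third row and column are now active; a direct check of the nine template entries shows that $\vec{w}_{A}$ shares the first two components of $\vec{w}_{D}$ and has third component of magnitude $\|\vec{p}-\vec{q}\|$, so $\|\vec{w}_{A}\|^{2}=1+\|\vec{p}-\vec{q}\|^{2}$. Multiplying by the prefactor $1/(\sigma_{\theta_{\mathrm{RX},0}}^{2}\|\vec{p}-\vec{q}\|^{2})$ yields $\lambda_{A,0}^{\mathrm{(G)}}$, and normalizing $\vec{w}_{A}$ reproduces the stated eigenvector with the scalar $v_{A,0}^{\mathrm{(G)}}$ fixed by the unit-norm condition.

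The only genuine obstacle is clerical: the sign factors $(-1)^{n}$ and $(-1)^{m}$ in the template must be tracked consistently across all nine entries so that the candidate vector $\vec{w}$ really satisfies $w_{i}w_{j}=[\boldsymbol\Upsilon_{n,m}]_{i,j}$ for every index pair $(i,j)$. Beyond this bookkeeping the proof is essentially trivial: identification of the rank-one factor does all the work, and the three eigenpairs fall out immediately from the factorization together with the standing assumption that $\|\vec{p}-\vec{q}\|>0$.
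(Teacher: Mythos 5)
Your proof is correct and follows essentially the same route as the paper's (Appendix B): both arguments rest on the observation that each matrix is rank one, the paper extracting the eigenvalue as the trace and confirming the eigenvector by a null-space check, while you read both off at once from the explicit outer-product factorization $c\,\vec{w}\vec{w}^{\mathrm{T}}$. Incidentally, your sign-tracking caveat is vindicated: the factorization of $\boldsymbol\Upsilon_{1,0}(\theta_{\mathrm{TX},0},\pi/2,0)$ forces opposite signs on the first two components, i.e. $\vec{v}_{D,0}^{\mathrm{(G)}}=[\sin(\theta_{\mathrm{TX},0}),-\cos(\theta_{\mathrm{TX},0}),0]^{\mathrm{T}}$ up to overall sign, so the $+\cos$ appearing in \eqref{eq:LOS_evec_D} is a sign typo rather than an error in your derivation.
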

\begin{proof}
See Appendix \ref{sec:appendix_B}.
\end{proof}
\textit{Interpretation}: TOA, AOD, and AOA of the LOS path contribute information to the EFIM. Each of these quantities provides information in \textit{one} direction, where the direction is given by the eigenvector corresponding to the non-zero eigenvalue. TOA and AOD provide position information in orthogonal directions. So the position is identifiable if TOA and AOD can be estimated. Neither TOA, nor AOD provide orientation information since the $\alpha$ component in \eqref{eq:LOS_evec_R} and \eqref{eq:LOS_evec_D} is zero, i.e. for known position, AOD does not contribute to the orientation estimation, but uncertainty in the AOD reduces the orientation EFIM. When $\sigma^2_{\theta_{\mathrm{TX},0}}\rightarrow \infty$, but $\sigma^2_{\theta_{\mathrm{RX},0}}$ and $\sigma^2_{\tau_0}$  are finite, then neither position nor orientation can be estimated. On the other hand, with $\sigma^2_{\theta_{\mathrm{RX},0}}\rightarrow \infty$, the orientation cannot be estimated, while the position can still be estimated. Similarly, $\sigma^2_{\tau_{0}}\rightarrow \infty$, means that the position cannot be estimated, but the orientation can still be determined. From \eqref{eq:LOS_evec_A}, we deduce that AOA provides both position and orientation information. Considering \eqref{eq:LOS_eval_D} and \eqref{eq:LOS_eval_A}, we see that the information gain reduces as the separation between base station and mobile terminal increases. For a large separation, AOA only provides orientation information since the x-y components of the eigenvector in \eqref{eq:LOS_evec_A} go to zero. In that case, the amount of orientation information is given by $1/\sigma^2_{\theta_{\mathrm{RX},0}}$.


\subsection{NLOS Information Gain}
Similar to the previous section, we decompose the information matrices of reflected paths using \eqref{eq:T_matrix} and the results from Appendix \ref{sec:appendix_A}. 
\begin{lemma}
\label{lemma:mp_gain}
The NLOS information gain matrix is given by
\begin{equation}
\tilde{\vec{B}}^{\mathrm{(G)}} =
\sum_{k=1}^{K-1}\underbrace{\frac{1}{\sigma^2_{\tau_k}c^2} \boldsymbol\Upsilon_{0,0}(\theta_{\mathrm{RX},k},0,0)}_{\tilde{\vec{B}}_{R,k}^{\mathrm{(G)}}}
	+\underbrace{\frac{1}{\sigma^2_{\theta_{\mathrm{RX},k}}\left\|\vec{p}-\vec{s}_k\right\|^2} \boldsymbol\Upsilon_{1,1}(\theta_{\mathrm{RX},k},\pi/2,\left\|\vec{p}-\vec{s}_k\right\|)}_{\tilde{\vec{B}}_{A,k}^{\mathrm{(G)}}}.
\label{eq:EFIM_gain_MP}
\end{equation}
\end{lemma}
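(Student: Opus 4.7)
The plan is to start from the closed-form expression for the NLOS information gain already derived in \eqref{eq:MP_gain_as_sum},
\begin{equation*}
\tilde{\vec{B}}^{\mathrm{(G)}}=\sum_{k=1}^{K-1} \vec{T}_{\vec{P}}^{(k)} \vec{J}_{\bar{\boldsymbol\eta}_k}\left(\vec{T}_{\vec{P}}^{(k)}\right)^{\mathrm{T}},
\end{equation*}
and reduce it to the two-term rank-one sum in the lemma. Since $\vec{J}_{\bar{\boldsymbol\eta}_k}=\mathrm{diag}(1/\sigma^2_{\tau_k},1/\sigma^2_{\theta_{\mathrm{TX},k}},1/\sigma^2_{\theta_{\mathrm{RX},k}})$ is diagonal, each summand splits naturally into three rank-one outer products, one per column of $\vec{T}_{\vec{P}}^{(k)}$:
\begin{equation*}
\vec{T}_{\vec{P}}^{(k)} \vec{J}_{\bar{\boldsymbol\eta}_k}\left(\vec{T}_{\vec{P}}^{(k)}\right)^{\mathrm{T}}=\frac{\vec{t}_{\tau,k}\vec{t}_{\tau,k}^{\mathrm{T}}}{\sigma^2_{\tau_k}}+\frac{\vec{t}_{D,k}\vec{t}_{D,k}^{\mathrm{T}}}{\sigma^2_{\theta_{\mathrm{TX},k}}}+\frac{\vec{t}_{A,k}\vec{t}_{A,k}^{\mathrm{T}}}{\sigma^2_{\theta_{\mathrm{RX},k}}},
\end{equation*}
where $\vec{t}_{\tau,k},\vec{t}_{D,k},\vec{t}_{A,k}$ are the columns of $\vec{T}_{\vec{P}}^{(k)}$ corresponding to $\tau_k$, $\theta_{\mathrm{TX},k}$ and $\theta_{\mathrm{RX},k}$, respectively.

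The central observation is that for an NLOS single-bounce path, the AOD is determined exclusively by the base station position $\vec{q}$ and the point of incidence $\vec{s}_k$, and therefore does not depend on either $\vec{p}$ or $\alpha$. Consequently, from Appendix \ref{sec:appendix_A}, $\partial\theta_{\mathrm{TX},k}/\partial\vec{p}=\vec{0}$ and $\partial\theta_{\mathrm{TX},k}/\partial\alpha=0$, so the middle column $\vec{t}_{D,k}$ vanishes and only the TOA and AOA terms survive. This is what collapses the three rank-one terms expected from the LOS decomposition in \eqref{eq:EFIM_gain_LOS} down to the two terms in \eqref{eq:EFIM_gain_MP}.

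Next I would substitute the remaining column expressions from Appendix \ref{sec:appendix_A}. For TOA along an NLOS path, $\tau_k=\bigl(\|\vec{q}-\vec{s}_k\|+\|\vec{s}_k-\vec{p}\|\bigr)/c$, so $\partial\tau_k/\partial\vec{p}$ is $1/c$ times the unit vector from $\vec{s}_k$ to $\vec{p}$ and $\partial\tau_k/\partial\alpha=0$. Expressed through the AOA geometry this gives $\vec{t}_{\tau,k}=\tfrac{1}{c}[\cos\theta_{\mathrm{RX},k},\sin\theta_{\mathrm{RX},k},0]^{\mathrm{T}}$ (up to the sign convention used in the appendix), and forming $\vec{t}_{\tau,k}\vec{t}_{\tau,k}^{\mathrm{T}}$ and comparing to the definition of $\boldsymbol\Upsilon_{n,m}$ immediately yields $\boldsymbol\Upsilon_{0,0}(\theta_{\mathrm{RX},k},0,0)/c^2$. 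For the AOA column, the standard planar geometry gives $\partial\theta_{\mathrm{RX},k}/\partial\vec{p}=[-\sin\theta_{\mathrm{RX},k},\cos\theta_{\mathrm{RX},k}]^{\mathrm{T}}/\|\vec{p}-\vec{s}_k\|$ and $\partial\theta_{\mathrm{RX},k}/\partial\alpha=-1$, so that $\vec{t}_{A,k}\propto[-\sin\theta_{\mathrm{RX},k},\cos\theta_{\mathrm{RX},k},\|\vec{p}-\vec{s}_k\|]^{\mathrm{T}}/\|\vec{p}-\vec{s}_k\|$. The outer product matches exactly the pattern of $\boldsymbol\Upsilon_{1,1}(\theta_{\mathrm{RX},k},\pi/2,\|\vec{p}-\vec{s}_k\|)/\|\vec{p}-\vec{s}_k\|^2$ once one rewrites $\cos^2(\theta+\pi/2)=\sin^2\theta$ and $\sin^2(\theta+\pi/2)=\cos^2\theta$ and checks the sign pattern selected by the indices $n=1$, $m=1$.

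The only delicate step is this last sign bookkeeping: the template $\boldsymbol\Upsilon_{n,m}$ absorbs the orientation of the reference frame and the sign of $\partial\theta_{\mathrm{RX},k}/\partial\alpha$ into the parity indices $n,m$, so the main obstacle is verifying that the Appendix's sign conventions for the partial derivatives produce precisely $(n,m)=(0,0)$ for the TOA contribution and $(n,m)=(1,1)$ for the AOA contribution, rather than some other combination. Once that is checked entry by entry, summing over $k=1,\ldots,K-1$ gives the claimed expression and the lemma follows.
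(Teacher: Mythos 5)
Your proof is correct and takes essentially the same route as the paper's own proof in Appendix \ref{sec:appendix_C_MP_gain}: both start from $\tilde{\vec{B}}^{(\mathrm{G})}=\sum_{k}\vec{T}_{\vec{P}}^{(k)}\vec{J}_{\bar{\boldsymbol\eta}_k}\bigl(\vec{T}_{\vec{P}}^{(k)}\bigr)^{\mathrm{T}}$ and evaluate the product using the entries of Appendix \ref{sec:appendix_A}. Your explicit observation that the AOD column of $\vec{T}_{\vec{P}}^{(k)}$ vanishes for $k>0$ (so only the TOA and AOA rank-one terms survive) is precisely what the paper encodes via the zero middle column of \eqref{eq:mat_T_P_k}, and your matching of the two surviving outer products to $\boldsymbol\Upsilon_{0,0}$ and $\boldsymbol\Upsilon_{1,1}$ is consistent with the lemma.
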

\begin{proof}
See Appendix \ref{sec:appendix_C_MP_gain}.
\end{proof}

Note that the matrices $\tilde{\vec{B}}_{R,k}^{\mathrm{(G)}}$ and $\tilde{\vec{B}}_{A,k}^{\mathrm{(G)}}$ have the similar structure as $\tilde{\vec{A}}_{R}^{\mathrm{(G)}}$ and $\tilde{\vec{A}}_{A}^{\mathrm{(G)}}$, respectively. Hence the eigenvalues and eigenvectors have also similar structure, and are given by 
\begin{subequations}
\begin{alignat}{2}
\lambda_{R,k}^{\mathrm{(G)}}&=\frac{1}{\sigma^2_{\tau_k}c^2},
\label{eq:MP_eval_R}
\\
\lambda_{A,k}^{\mathrm{(G)}}&=\frac{\left\|\vec{p}-\vec{s}_k\right\|^2+1}{\sigma^2_{\theta_{\mathrm{RX},k}}\left\|\vec{p}-\vec{s}_k\right\|^2},
\label{eq:MP_eval_A}
\end{alignat}
\end{subequations}
and 
\begin{subequations}
\begin{alignat}{2}
\vec{v}_{R,k}^{\mathrm{(G)}}&=[\cos(\theta_{\mathrm{RX},k}), \sin(\theta_{\mathrm{RX},k}),0]^{\mathrm{T}},
\label{eq:MP_evec_R}
\\
\vec{v}_{A,k}^{\mathrm{(G)}}&=v_{A,k}^{\mathrm{(G)}}\left[-\frac{\sin(\theta_{\mathrm{RX},k})}{\left\|\vec{p}-\vec{s}_k\right\|}, \frac{\cos(\theta_{\mathrm{RX},k})}{\left\|\vec{p}-\vec{s}_k\right\|},1\right]^{\mathrm{T}},
\label{eq:MP_evec_A}
\end{alignat}
\end{subequations}
respectively. The term  $v_{A,k}^{\mathrm{(G)}}=\sqrt{1/(1+\left\|\vec{p}-\vec{s}_k\right\|^2)}$ normalizes the eigenvector.

\textit{Interpretation}: AOA and TOA of each NLOS component contribute to the EFIM. The eigenvalues can be interpreted as the achievable information gains if the points of incidence were perfectly known. From \eqref{eq:MP_evec_R}, we see that TOA provides only position information. From \eqref{eq:MP_evec_A}, we deduce that AOA provides both position and orientation information. The term in \eqref{eq:MP_eval_A} suggests that the information gain of NLOS components decreases as the separation between the mobile terminal and point of incidence increases. For large distances, the information gain converges to the inverse of the AOA estimation accuracy $1/\sigma^2_{\mathrm{RX},k}$. Obverse that in this case, a NLOS component provides only mostly orientation information gain, since the x-y components of the eigenvector in \eqref{eq:MP_evec_A} go to zero.  Note that AOD estimation does not provide any information gain. However, we will see in the next section that the quality of the AOD estimation strongly influences the information loss.

\subsection{NLOS Information Loss}
The NLOS information loss matrix is not as obvious to decompose as the previous decompositions since it involves multiple matrix-matrix products including the inverse of a matrix-matrix product. In the following, we will present the results of the decomposition. 
\begin{lemma}
The NLOS information loss matrix is given by
\begin{equation}
\tilde{\vec{B}}^{\mathrm{(L)}}
 = \sum_{k=1}^{K-1}\underbrace{w_{R,k}^{\mathrm{(L)}}\boldsymbol\Upsilon_{0,0}(\theta_{\mathrm{TX},0},0)}_{\tilde{\vec{B}}_{R,k}^{\mathrm{(L)}}}+\underbrace{w_{A,k}^{\mathrm{(L)}}\boldsymbol\Upsilon_{1,1}(\theta_{\mathrm{RX},k},\left\|\vec{p}-\vec{q}\right\|)}_{\tilde{\vec{B}}_{A,k}^{\mathrm{(L)}}}-\gamma_{\vec{s}_k} \vec{B}_{k}^{\mathrm{(L)}}
\label{eq:EFIM_loss_MP}
\end{equation}
where the weights\footnote{Each weight is a function of $(\theta_{\mathrm{TX},k},\theta_{\mathrm{RX},k},\tau_k,\sigma^2_{\theta_{\mathrm{TX},k}},\sigma^2_{\theta_{\mathrm{RX},k}},\sigma^2_{\tau_k},\vec{p},\vec{q},\vec{s}_k)$. To simplify notation, we dropped the arguments.} $w_{R,k}^{\mathrm{(L)}}, w_{A,k}^{\mathrm{(L)}}$ and $\gamma_{\vec{s}_k}$, and the matrix $\vec{B}_{k}^{\mathrm{(L)}}$  are defined in Appendix \ref{sec:appendix_C_MP_loss}.
\end{lemma}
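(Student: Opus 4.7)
The plan is to evaluate the summand in \eqref{eq:MP_loss_as_sum} for a fixed $k\geq 1$ in closed form and then to collect the resulting entries into the template matrices defined in \eqref{eq:matrix_template}. Throughout, I will use the block structure $\vec{J}_{\bar{\boldsymbol\eta}_k}=\mathrm{diag}\bigl(1/(\sigma_{\tau_k}^2),\,1/\sigma_{\theta_{\mathrm{TX},k}}^2,\,1/\sigma_{\theta_{\mathrm{RX},k}}^2\bigr)$ inherited from \eqref{eq:FIM_diagonal_eta_bar} together with the partial derivatives of $(\tau_k,\theta_{\mathrm{TX},k},\theta_{\mathrm{RX},k})$ with respect to $\vec{p}$, $\alpha$, and $\vec{s}_k$ that are listed in Appendix \ref{sec:appendix_A}.

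First I would form the inner $2\times 2$ matrix $\vec{M}_k\triangleq \vec{T}_{\vec{s}_k}\vec{J}_{\bar{\boldsymbol\eta}_k}\vec{T}_{\vec{s}_k}^{\mathrm{T}}$. Because only the TOA, the AOD, and the AOA depend on $\vec{s}_k$, and because $\vec{J}_{\bar{\boldsymbol\eta}_k}$ is diagonal, this is a weighted sum of three rank-one outer products built from $\partial\tau_k/\partial\vec{s}_k$, $\partial\theta_{\mathrm{TX},k}/\partial\vec{s}_k$, and $\partial\theta_{\mathrm{RX},k}/\partial\vec{s}_k$. Each of these gradients is (up to a scalar depending on the distance $\lVert\vec{s}_k-\vec{q}\rVert$ or $\lVert\vec{p}-\vec{s}_k\rVert$) a unit vector pointing along $(\cos\theta_{\mathrm{TX},k},\sin\theta_{\mathrm{TX},k})$ or its rotation by $\pi/2$, and similarly for the receive side. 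I would therefore express $\vec{M}_k$ in the two-by-two orthonormal frame attached to $\vec{p}-\vec{s}_k$, which renders both $\vec{M}_k$ and its inverse transparent. The determinant $\det(\vec{M}_k)$ will appear in the denominator and, after some simplification, will be precisely what enters the weight $\gamma_{\vec{s}_k}$.

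Next I would compute $\vec{N}_k\triangleq \vec{T}_{\vec{P}}^{(k)}\vec{J}_{\bar{\boldsymbol\eta}_k}\vec{T}_{\vec{s}_k}^{\mathrm{T}}$; this is a $3\times 2$ matrix whose rows, after substituting the explicit derivatives, are sums of three rank-one terms indexed by the channel parameter. Forming $\vec{N}_k\vec{M}_k^{-1}\vec{N}_k^{\mathrm{T}}$ then gives the rank-(at most) two $3\times 3$ contribution of the $k$-th path to $\tilde{\vec{B}}^{(\mathrm{L})}$. Expanding this product and using the identities $\cos^2\theta+\sin^2\theta=1$ and $\sin\theta\cos\theta=\tfrac12\sin 2\theta$, I expect the ``diagonal'' pieces to reorganise into copies of $\boldsymbol\Upsilon_{0,0}(\theta_{\mathrm{TX},k},0,0)$ and $\boldsymbol\Upsilon_{1,1}(\theta_{\mathrm{RX},k},\pi/2,\lVert\vec{p}-\vec{q}\rVert)$ weighted respectively by $w_{R,k}^{(\mathrm{L})}$ and $w_{A,k}^{(\mathrm{L})}$, while the genuinely ``cross'' contribution, which mixes the AOD/AOA/TOA derivatives through $\vec{M}_k^{-1}$, will remain as the residual term $\gamma_{\vec{s}_k}\vec{B}_k^{(\mathrm{L})}$.

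The main obstacle will be algebraic rather than conceptual: the inversion of $\vec{M}_k$ introduces a common denominator that couples all three channel parameters, so the apparent ``diagonal plus cross'' splitting only emerges after one carefully factors the numerators using the trigonometric angle-difference identities relating $\theta_{\mathrm{TX},k}$, $\theta_{\mathrm{RX},k}$, and the angle $\phi$ of the segment $\vec{q}\vec{p}$. Once that bookkeeping is done, matching the resulting entries to the template in \eqref{eq:matrix_template} yields the claimed decomposition and, by inspection, the closed forms of $w_{R,k}^{(\mathrm{L})}$, $w_{A,k}^{(\mathrm{L})}$, $\gamma_{\vec{s}_k}$, and $\vec{B}_k^{(\mathrm{L})}$ recorded in Appendix \ref{sec:appendix_C_MP_loss}. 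Summing over $k=1,\ldots,K-1$ finishes the proof.
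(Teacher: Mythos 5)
Your proposal follows essentially the same route as the paper: exploit the block-diagonal structure of $\vec{D}[\vec{J}_{\bar{\boldsymbol\eta}}]_{4:3K,4:3K}\vec{D}^{\mathrm{T}}$ to reduce \eqref{eq:MP_loss_as_sum} to per-path terms $\vec{T}_{\vec{P}}^{(k)}\vec{J}_{\bar{\boldsymbol\eta}_k}\vec{T}_{\vec{s}_k}^{\mathrm{T}}\bigl(\vec{T}_{\vec{s}_k}\vec{J}_{\bar{\boldsymbol\eta}_k}\vec{T}_{\vec{s}_k}^{\mathrm{T}}\bigr)^{-1}\vec{T}_{\vec{s}_k}\vec{J}_{\bar{\boldsymbol\eta}_k}(\vec{T}_{\vec{P}}^{(k)})^{\mathrm{T}}$, invert the $2\times 2$ middle matrix explicitly through its determinant, and reorganise the resulting entries into the $\boldsymbol\Upsilon$ templates plus the residual $\gamma_{\vec{s}_k}\vec{B}_{k}^{\mathrm{(L)}}$ term, which is exactly what Appendix \ref{sec:appendix_C_MP_loss} does. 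The one cosmetic slip --- $\partial\tau_k/\partial\vec{s}_k$ is the sum of two unit vectors along the transmit and receive directions (whence the $1+\cos(\Delta\theta_k)$ factors), not a single rotated unit vector --- does not affect the brute-force computation you describe.
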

\begin{proof}
See Appendix \ref{sec:appendix_C_MP_loss}. 
\end{proof}

\textit{Interpretation}: The information loss of each path is composed of three terms.
Comparing \eqref{eq:EFIM_gain_MP} and \eqref{eq:EFIM_loss_MP}, we see that the first two terms  of every path $k$ have the same structure. In particular, the TOA information gain $\tilde{\vec{B}}_{R,k}^{\mathrm{(G)}}$ and loss $\tilde{\vec{B}}_{R,k}^{\mathrm{(L)}}$ only differ in the weights. The same holds for the AOA information gain $\tilde{\vec{B}}_{A,k}^{\mathrm{(G)}}$ and loss $\tilde{\vec{B}}_{A,k}^{\mathrm{(L)}}$. Hence each information gain and loss pair has eigenvectors that are pointing in the same direction. Yet, the eigenvalues have opposite signs. For these terms, net information gains are immediately given by the differences of the respective weights. On the other hand, the third term has eigenvectors which are not aligned with any other eigenvectors. In the following subsection, we will show that combining the gain and loss matrices of every NLOS component results in a net information gain matrix which has only a single non-zero, positive eigenvalue. 
\subsection{Net NLOS Information Gain}
\label{subsec:eff_MP_gain}
\begin{lemma}
\label{lemma:mp_eff_gain}
 The net NLOS gain matrix $\tilde{\vec{B}}^{(\mathrm{N})}\triangleq\tilde{\vec{B}}^{\mathrm{(G)}}-\tilde{\vec{B}}^{\mathrm{(L)}}$ is given by
\begin{equation}
\tilde{\vec{B}}^{(\mathrm{N})}
 =\sum_{k=1}^{K-1} \epsilon_{\mathbf{s}_k} \boldsymbol\Upsilon_{0,0}(\theta_{\mathrm{RX},0},0,0)+\beta_{\mathbf{s}_k}\boldsymbol\Upsilon_{1,1}(\theta_{\mathrm{RX},k},\pi/2,\left\|\vec{p}-\vec{q}\right\|)
	+ \gamma_{\mathbf{s}_k} \vec{B}_{k}^{\mathrm{(L)}},
\label{eq:EFIM_eff_Gain}
\end{equation}
where $\epsilon_{\mathbf{s}_k}\triangleq \frac{1}{\sigma_{\tau_k}^2c^2}-w_{R,k}^{\mathrm{(L)}}$ and $\beta_{\mathbf{s}_k}\triangleq \frac{1 }{\sigma_{\theta_{\mathrm{RX},k}}^2\left\|\mathbf{p}-\mathbf{s}_k\right\|^2}-w_{A,k}^{\mathrm{(L)}}$.
\end{lemma}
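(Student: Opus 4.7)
The plan is to prove Lemma \ref{lemma:mp_eff_gain} by direct substitution and collection of terms, since the two quantities being subtracted are already decomposed in a very compatible form. Concretely, I would start from the definition $\tilde{\vec{B}}^{(\mathrm{N})} \triangleq \tilde{\vec{B}}^{(\mathrm{G})} - \tilde{\vec{B}}^{(\mathrm{L})}$ and insert the closed-form expression for $\tilde{\vec{B}}^{(\mathrm{G})}$ from Lemma \ref{lemma:mp_gain}, equation \eqref{eq:EFIM_gain_MP}, together with the closed-form expression for $\tilde{\vec{B}}^{(\mathrm{L})}$ from equation \eqref{eq:EFIM_loss_MP}. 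Since both are already written as sums over NLOS paths $k=1,\ldots,K-1$, I can swap and merge the sums into a single sum over $k$.

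Next I would match the two inner summands of the gain against the three inner summands of the loss term by term. The key observation is that the gain contributes an $\boldsymbol\Upsilon_{0,0}(\theta_{\mathrm{RX},k},0,0)$ block with weight $1/(\sigma_{\tau_k}^2 c^2)$ and an $\boldsymbol\Upsilon_{1,1}(\theta_{\mathrm{RX},k},\pi/2,\|\vec{p}-\vec{s}_k\|)$ block with weight $1/(\sigma_{\theta_{\mathrm{RX},k}}^2 \|\vec{p}-\vec{s}_k\|^2)$, while the loss contributes $\boldsymbol\Upsilon_{0,0}$ and $\boldsymbol\Upsilon_{1,1}$ blocks of the \emph{same} angular and range structure but with different scalar weights $w_{R,k}^{(\mathrm{L})}$ and $w_{A,k}^{(\mathrm{L})}$, plus an extra $-\gamma_{\vec{s}_k}\vec{B}_k^{(\mathrm{L})}$ contribution. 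Hence the two $\boldsymbol\Upsilon_{0,0}$ contributions combine by linearity into the single term with coefficient $\epsilon_{\vec{s}_k} = 1/(\sigma_{\tau_k}^2 c^2) - w_{R,k}^{(\mathrm{L})}$, and similarly the two $\boldsymbol\Upsilon_{1,1}$ contributions combine into the term with coefficient $\beta_{\vec{s}_k} = 1/(\sigma_{\theta_{\mathrm{RX},k}}^2 \|\vec{p}-\vec{s}_k\|^2) - w_{A,k}^{(\mathrm{L})}$. The flip $-(-\gamma_{\vec{s}_k}\vec{B}_k^{(\mathrm{L})}) = +\gamma_{\vec{s}_k}\vec{B}_k^{(\mathrm{L})}$ then produces the last summand exactly as stated in \eqref{eq:EFIM_eff_Gain}.

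There is no real obstacle since the decompositions in Lemma \ref{lemma:mp_gain} and the loss lemma have already been tailored so that the matrix templates $\boldsymbol\Upsilon_{n,m}$ line up. The only thing to verify carefully is that in Lemma \ref{lemma:mp_gain} and the loss expression the $\boldsymbol\Upsilon_{0,0}$ and $\boldsymbol\Upsilon_{1,1}$ templates truly refer to the same angle and range arguments for each path $k$, so that coefficient collection is justified; any apparent discrepancy (for instance between $\theta_{\mathrm{RX},k}$ and $\theta_{\mathrm{TX},0}$, or between $\|\vec{p}-\vec{s}_k\|$ and $\|\vec{p}-\vec{q}\|$) must be read as referring to the path-specific geometric parameters defined in Appendix \ref{sec:appendix_A}. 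Once that identification is made, the proof reduces to one line of linearity, and the lemma follows.
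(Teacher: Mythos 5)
Your proposal is correct and matches the paper's own (one-line) proof, which likewise just combines the gain decomposition of Lemma \ref{lemma:mp_gain} with the loss decomposition and subtracts term by term, collecting the $\boldsymbol\Upsilon_{0,0}$ and $\boldsymbol\Upsilon_{1,1}$ coefficients into $\epsilon_{\mathbf{s}_k}$ and $\beta_{\mathbf{s}_k}$ and flipping the sign of the $\gamma_{\mathbf{s}_k}\vec{B}_k^{\mathrm{(L)}}$ term. Your caveat about reconciling the inconsistent template arguments (e.g.\ $\theta_{\mathrm{TX},0}$ versus $\theta_{\mathrm{RX},k}$, and $\left\|\vec{p}-\vec{q}\right\|$ versus $\left\|\vec{p}-\vec{s}_k\right\|$) as the path-specific quantities is well taken and is the only point the paper glosses over.
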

\begin{proof}
Applying Lemma \ref{lemma:mp_gain} and \ref{lemma:mp_eff_gain}, we immediately obtain \eqref{eq:EFIM_eff_Gain}.
\end{proof}

We define $\tilde{\vec{B}}^{(N)}\triangleq\sum_{k=1}^{K-1}\boldsymbol\Psi_{\mathbf{s}_k}$, where 
\begin{equation}
\boldsymbol\Psi_{\mathbf{s}_k}\triangleq\epsilon_{\mathbf{s}_k} \boldsymbol\Upsilon_{0,0}(\theta_{\mathrm{TX},0},0,0)+\beta_{\mathbf{s}_k}\boldsymbol\Upsilon_{1,1}(\theta_{\mathrm{RX},k},\pi/2,\left\|\vec{p}-\vec{q}\right\|)
	+ \gamma_{\mathbf{s}_k} \vec{B}_{k}^{\mathrm{(L)}}.
\label{eq:Psi_k}
\end{equation}
\begin{theorem}
\label{theorem:net_gain}
The net information gain matrix of the $k^{\mathrm{th}}$ NLOS component $\boldsymbol\Psi_{\mathbf{s}_k}$ is rank one. The only non-zero eigenvalue of $\boldsymbol\Psi_{\mathbf{s}_k}$ is always positive and given by
\begin{equation}
\lambda_{\mathbf{s}_k}
\triangleq
\frac{2+\left\|\mathbf{p}-\mathbf{s}_k\right\|^2 (1+\cos(\Delta\theta_k))}
{(1-\cos(\Delta\theta_k))c^2\sigma^2_{\tau_k}
+(1+\cos(\Delta\theta_k))(\left\|\mathbf{p}-\mathbf{s}_k\right\|^2\sigma^2_{\theta_{\mathrm{RX},k}}+\left\|\mathbf{q}-\mathbf{s}_k\right\|^2\sigma^2_{\theta_{\mathrm{TX},k}})},
\label{eq:non_zero_eigenvalue}
\end{equation}
with the corresponding unit-norm eigenvector 
\begin{equation}
\mathbf{v}_{\mathbf{s}_k}
\triangleq
v_{\vec{s}_k}
\begin{bmatrix}
-\frac{1}{\left\|\mathbf{p}-\mathbf{s}_k\right\|}\left(\frac{\epsilon_{\mathbf{s}_k}}{\gamma_{\mathbf{s}_k}}\cos(\theta_{\mathrm{RX},k})+ \sin(\theta_{\mathrm{RX},k})\right)
\\
\frac{1}{\left\|\mathbf{p}-\mathbf{s}_k\right\|}\left(-\frac{\epsilon_{\mathbf{s}_k}}{\gamma_{\mathbf{s}_k}}\sin(\theta_{\mathrm{RX},k})+ \cos(\theta_{\mathrm{RX},k})\right)
\\
1
\end{bmatrix},
\label{eq:non_zero_eigenvector}
\end{equation}
where $\Delta\theta_k \triangleq \theta_{\mathrm{RX},k}-\theta_{\mathrm{TX},k}$ and $v_{\vec{s}_k}=\sqrt{(1+\cos(\Delta\theta_k))/(2+\left\|\mathbf{p}-\mathbf{s}_k\right\|^2 (1+\cos(\Delta\theta_k)))}$.
\end{theorem}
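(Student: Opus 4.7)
My plan is to sidestep any direct diagonalization of the expression in \eqref{eq:Psi_k} and instead exploit the fact that $\boldsymbol\Psi_{\mathbf{s}_k}$ is, by construction, the Schur complement produced by marginalizing the 2D nuisance parameter $\vec{s}_k$ out of the 5D per-path FIM contribution. Using the block structure already derived in \eqref{eq:MP_gain_as_sum}--\eqref{eq:MP_loss_as_sum} and grouping path $k$,
\[
\boldsymbol\Psi_{\mathbf{s}_k} = \vec{T}_{\vec{P}}^{(k)} \vec{J}_{\bar{\boldsymbol\eta}_k} (\vec{T}_{\vec{P}}^{(k)})^{\mathrm{T}} - \vec{T}_{\vec{P}}^{(k)} \vec{J}_{\bar{\boldsymbol\eta}_k} \vec{T}_{\vec{s}_k}^{\mathrm{T}} \bigl(\vec{T}_{\vec{s}_k}\vec{J}_{\bar{\boldsymbol\eta}_k}\vec{T}_{\vec{s}_k}^{\mathrm{T}}\bigr)^{-1} \vec{T}_{\vec{s}_k} \vec{J}_{\bar{\boldsymbol\eta}_k} (\vec{T}_{\vec{P}}^{(k)})^{\mathrm{T}}.
\]
Since $\vec{J}_{\bar{\boldsymbol\eta}_k}$ is diagonal and strictly positive, its square root $\vec{M}_k = \vec{J}_{\bar{\boldsymbol\eta}_k}^{1/2}$ is well defined. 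Setting $\vec{U}_k = \vec{M}_k (\vec{T}_{\vec{P}}^{(k)})^{\mathrm{T}}$ (a $3\times 3$ matrix) and $\vec{V}_k = \vec{M}_k \vec{T}_{\vec{s}_k}^{\mathrm{T}}$ (a $3\times 2$ matrix), the above rewrites as $\boldsymbol\Psi_{\mathbf{s}_k} = \vec{U}_k^{\mathrm{T}}\bigl(\vec{I}_3 - \vec{V}_k(\vec{V}_k^{\mathrm{T}}\vec{V}_k)^{-1}\vec{V}_k^{\mathrm{T}}\bigr)\vec{U}_k$. The inner matrix is the orthogonal projector onto the one-dimensional complement of the column span of $\vec{V}_k$ (which has rank $2$ for any generic scatterer position), so it factors as $\vec{n}_k\vec{n}_k^{\mathrm{T}}/\|\vec{n}_k\|^2$ for any nonzero $\vec{n}_k \perp \mathrm{col}(\vec{V}_k)$. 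Hence $\boldsymbol\Psi_{\mathbf{s}_k} = \|\vec{n}_k\|^{-2}(\vec{U}_k^{\mathrm{T}}\vec{n}_k)(\vec{U}_k^{\mathrm{T}}\vec{n}_k)^{\mathrm{T}}$, which is manifestly rank-one and PSD.

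With this factorization, the unit eigenvector is $\vec{v}_{\vec{s}_k} \propto \vec{U}_k^{\mathrm{T}}\vec{n}_k$ and the only nonzero eigenvalue is either $\lambda_{\vec{s}_k}=\|\vec{U}_k^{\mathrm{T}}\vec{n}_k\|^2/\|\vec{n}_k\|^2$ or, more cheaply, $\lambda_{\vec{s}_k}=\mathrm{tr}(\boldsymbol\Psi_{\mathbf{s}_k})$, since a rank-one PSD matrix has trace equal to its lone nonzero eigenvalue. To recover \eqref{eq:non_zero_eigenvector}, I would pick $\vec{n}_k$ concretely as the cross product of the two columns of $\vec{V}_k$ and substitute the closed-form derivatives $\partial\tau_k/\partial\vec{s}_k,\;\partial\theta_{\mathrm{TX},k}/\partial\vec{s}_k,\;\partial\theta_{\mathrm{RX},k}/\partial\vec{s}_k$ from Appendix~\ref{sec:appendix_A}. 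The ratio $\epsilon_{\vec{s}_k}/\gamma_{\vec{s}_k}$ that appears as the relative weight in \eqref{eq:non_zero_eigenvector} should fall out naturally because both $\epsilon_{\vec{s}_k}$ and $\gamma_{\vec{s}_k}$ are already known functions of those same derivatives and the diagonal of $\vec{J}_{\bar{\boldsymbol\eta}_k}$.

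The main obstacle is algebraic rather than conceptual: reducing the raw trigonometric trace to the compact form \eqref{eq:non_zero_eigenvalue}, which depends only on the chord distances $\|\vec{p}-\vec{s}_k\|,\,\|\vec{q}-\vec{s}_k\|$ and the single composite angle $\Delta\theta_k=\theta_{\mathrm{RX},k}-\theta_{\mathrm{TX},k}$. The key identities I expect to invoke repeatedly are $\cos\theta_{\mathrm{RX},k}\cos\theta_{\mathrm{TX},k}+\sin\theta_{\mathrm{RX},k}\sin\theta_{\mathrm{TX},k}=\cos(\Delta\theta_k)$ together with the half-angle forms $1\pm\cos(\Delta\theta_k)=2\cos^2(\Delta\theta_k/2)$ or $2\sin^2(\Delta\theta_k/2)$, which explain the symmetric way in which the TOA variance $\sigma^2_{\tau_k}$ and the AOA/AOD variances appear in the denominator with coefficients $(1-\cos\Delta\theta_k)$ and $(1+\cos\Delta\theta_k)$ respectively. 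Positivity of $\lambda_{\vec{s}_k}$ follows from the projection form but is also visible directly from \eqref{eq:non_zero_eigenvalue}, since $(1-\cos\Delta\theta_k)$ and $(1+\cos\Delta\theta_k)$ are nonnegative and never simultaneously zero, and the numerator is bounded below by $2$.
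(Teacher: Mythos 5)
Your proposal is correct, and it takes a genuinely different route from the paper. The paper works entirely at the level of the explicit closed-form entries: it writes $\boldsymbol\Psi_{\mathbf{s}_k}$ as the weighted sum $\epsilon_{\mathbf{s}_k}\boldsymbol\Upsilon_{0,0}+\beta_{\mathbf{s}_k}\boldsymbol\Upsilon_{1,1}+\gamma_{\mathbf{s}_k}\vec{B}_k^{\mathrm{(L)}}$, establishes rank one by exhibiting pairwise linear dependence of the columns via the scalar identity $\epsilon_{\mathbf{s}_k}\beta_{\mathbf{s}_k}-\gamma_{\mathbf{s}_k}^2=0$, takes the trace for the eigenvalue, and verifies the eigenvector by substitution into $(\lambda_{\mathbf{s}_k}\vec{I}-\boldsymbol\Psi_{\mathbf{s}_k})\vec{v}_{\mathbf{s}_k}=\vec{0}$. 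Your Schur-complement factorization $\boldsymbol\Psi_{\mathbf{s}_k}=\vec{U}_k^{\mathrm{T}}\bigl(\vec{I}_3-\vec{V}_k(\vec{V}_k^{\mathrm{T}}\vec{V}_k)^{-1}\vec{V}_k^{\mathrm{T}}\bigr)\vec{U}_k$ is valid (the cross terms do reduce to $\vec{U}_k^{\mathrm{T}}\vec{V}_k$, etc., since $\vec{J}_{\bar{\boldsymbol\eta}_k}$ is diagonal positive), and it is more illuminating: rank one and positive semi-definiteness follow structurally from the fact that projecting out a rank-two nuisance subspace from a three-dimensional observation leaves exactly one informative direction, and your argument simultaneously explains \emph{why} the paper's identity $\epsilon_{\mathbf{s}_k}\beta_{\mathbf{s}_k}=\gamma_{\mathbf{s}_k}^2$ must hold rather than verifying it by brute force. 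What your route does not buy for free is the closed forms \eqref{eq:non_zero_eigenvalue}--\eqref{eq:non_zero_eigenvector}; extracting them from $\vec{n}_k$ (the cross product of the columns of $\vec{V}_k$) and $\tr(\boldsymbol\Psi_{\mathbf{s}_k})$ requires essentially the same trigonometric reduction the paper performs, so the net algebraic effort is comparable.

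One small caveat: strict positivity of $\lambda_{\mathbf{s}_k}$ does not follow from the projection form alone, only non-negativity. Since $\vec{T}_{\vec{P}}^{(k)}$ has a zero column for $k>0$ (the AOD does not depend on $\vec{p}$ or $\alpha$), $\vec{U}_k$ is singular, so you must additionally check that $\vec{n}_k\notin\ker(\vec{U}_k^{\mathrm{T}})$; this holds because $\ker(\vec{U}_k^{\mathrm{T}})$ is spanned by the second standard basis vector while $\partial\theta_{\mathrm{TX},k}/\partial\vec{s}_k\neq\vec{0}$ forces $\vec{n}_k$ off that axis. You cover yourself by also reading positivity off the final formula, which is exactly how the paper argues it, so this is a loose end rather than a gap.
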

\begin{proof}
Outline: using Lemma \ref{lemma:mp_eff_gain}, it can be shown that $\boldsymbol\Psi_{\mathbf{s}_k}$ is rank one. Hence the only non-zero eigenvalue is given by $\lambda_{\mathbf{s}_k}=\tr\left(\boldsymbol\Psi_{\mathbf{s}_k}\right)$. Positivity of $\lambda_{\mathbf{s}_k}$ is readily seen since all terms in \eqref{eq:non_zero_eigenvalue} are non-negative, i.e. $(1+\cos(\theta_{\mathrm{RX},k}-\theta_{\mathrm{TX},k}))\geq 0$, $(1-\cos(\theta_{\mathrm{RX},k}-\theta_{\mathrm{TX},k}))\geq 0$, and all other terms are positive by definition. Finally, we show that the vector $\mathbf{v}_{\mathbf{s}_k}$ in \eqref{eq:non_zero_eigenvector} is in the null space of $\left(\lambda_{\mathbf{s}_k}\vec{I}-\boldsymbol\Psi_{\mathbf{s}_k} \right)$. Hence $\mathbf{v}_{\mathbf{s}_k}$ is the corresponding eigenvector. The details can be found in Appendix \ref{sec:appendix_C_MP_eff_gain}.
\end{proof}

\textit{Interpretation}: 
Each NLOS component provides one dimensional Fisher information for all three parameters ($p_x$,$p_y$, and $\alpha$). The information of all NLOS components is additive and, thus, contributes to the EFIM which, in turn, reduces the position and orientation error bound. Hence NLOS components can be harnessed to increase the position and orientation estimation accuracy in 5G mmWave MIMO systems. The amount of information gained from a NLOS component depends strongly on the geometry (this can be seen from the $\cos(\Delta\theta_k)$ terms in \eqref{eq:non_zero_eigenvalue}), i.e. points of incidence which are located in certain areas provide more information than points of incidence in other areas. From the denominator in \eqref{eq:non_zero_eigenvalue}, it can be seen that NLOS components provide considerable Fisher information if and only if TOA, AOD, and AOA can be estimated sufficiently accurate. If any of the three quantities cannot be estimated, e.g. AOD, this is equivalent to $\sigma^2_{\theta_{\mathrm{TX},k}}\rightarrow \infty$. Hence $\lambda_{\vec{s}_k}\rightarrow 0$ which means that the NLOS component does not provide information useful for position and orientation estimation. Finally, we see that points of incidence which are close to the base station or the mobile terminal (i.e. with small $\left\|\mathbf{q}-\mathbf{s}_k\right\|$ or $\left\|\mathbf{p}-\mathbf{s}_k\right\|$, respectively) are more informative than points of incidence which are farther away. Intuitively, this observation makes sense because, for a given angular estimation accuracy, the position of a point of incidence can be estimated better if it is close the base station. If it is far away from the base station, a small angular estimation error translates into a large estimation error of the point of incidence. The same holds for the path from the point of incidence to the mobile terminal. 
\begin{corollary}
\label{corol:sum_of_rank_one}
The EFIM of position and orientation is given by the sum of the outer-products of the eigenvectors weighted by the corresponding eigenvalues, i.e.
\begin{equation}
\vec{J}_{\tilde{\boldsymbol\eta}_{\vec{p},\alpha}}^{\mathrm{e}}=\sum_{j\in R,D,A} \lambda_{j,0}^{\mathrm{(G)}}\vec{v}_{j,0}^{\mathrm{(G)}}\left(\vec{v}_{j,0}^{\mathrm{(G)}}\right)^{\mathrm{T}}+\sum_{k=1}^{K-1} \lambda_{\mathbf{s}_k}\vec{v}_{\vec{s}_k}\vec{v}_{\vec{s}_k}^{\mathrm{T}}.
\label{eq:EFIM_corollary}
\end{equation}
\end{corollary}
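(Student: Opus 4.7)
The plan is to start from the three-way split of the EFIM in \eqref{eq:eq_FIM}, namely
\[
\vec{J}_{\tilde{\boldsymbol\eta}_{\vec{p},\alpha}}^{\mathrm{e}} \;=\; \tilde{\vec{A}}^{\mathrm{(G)}} \;+\; \tilde{\vec{B}}^{\mathrm{(N)}},
\qquad \tilde{\vec{B}}^{\mathrm{(N)}}=\tilde{\vec{B}}^{\mathrm{(G)}}-\tilde{\vec{B}}^{\mathrm{(L)}},
\]
and then to decompose each of the two summands using results already established in the paper. Since the EFIM is a sum, and spectral decomposition distributes across direct sums of rank-one pieces, the result will follow by collecting terms.

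First I would handle the LOS contribution $\tilde{\vec{A}}^{\mathrm{(G)}}$. From \eqref{eq:EFIM_gain_LOS} it is written as $\tilde{\vec{A}}_R^{\mathrm{(G)}}+\tilde{\vec{A}}_D^{\mathrm{(G)}}+\tilde{\vec{A}}_A^{\mathrm{(G)}}$. Each summand is a scalar multiple of a matrix of the form $\boldsymbol\Upsilon_{n,m}(\theta,\phi,\rho)$, and a direct inspection of the template in \eqref{eq:matrix_template} shows that $\boldsymbol\Upsilon_{n,m}(\theta,\phi,\rho)$ is the outer product of a single vector; hence each of $\tilde{\vec{A}}_R^{\mathrm{(G)}},\tilde{\vec{A}}_D^{\mathrm{(G)}},\tilde{\vec{A}}_A^{\mathrm{(G)}}$ is rank one. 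By Proposition \ref{prop:LOS} its unique non-zero eigenvalue is $\lambda_{j,0}^{\mathrm{(G)}}$ with unit-norm eigenvector $\vec{v}_{j,0}^{\mathrm{(G)}}$ for $j\in\{R,D,A\}$, so the spectral theorem gives $\tilde{\vec{A}}_{j}^{\mathrm{(G)}}=\lambda_{j,0}^{\mathrm{(G)}}\vec{v}_{j,0}^{\mathrm{(G)}}(\vec{v}_{j,0}^{\mathrm{(G)}})^{\mathrm{T}}$, which after summing over $j$ reproduces the first sum in \eqref{eq:EFIM_corollary}.

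Next I would handle the net NLOS contribution. By Lemma \ref{lemma:mp_eff_gain} we can write $\tilde{\vec{B}}^{\mathrm{(N)}}=\sum_{k=1}^{K-1}\boldsymbol\Psi_{\vec{s}_k}$ with $\boldsymbol\Psi_{\vec{s}_k}$ given by \eqref{eq:Psi_k}. Theorem \ref{theorem:net_gain} already states that each $\boldsymbol\Psi_{\vec{s}_k}$ is rank one, with the single non-zero (and positive) eigenvalue $\lambda_{\vec{s}_k}$ in \eqref{eq:non_zero_eigenvalue} and unit-norm eigenvector $\vec{v}_{\vec{s}_k}$ in \eqref{eq:non_zero_eigenvector}. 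Once again the spectral theorem gives $\boldsymbol\Psi_{\vec{s}_k}=\lambda_{\vec{s}_k}\vec{v}_{\vec{s}_k}\vec{v}_{\vec{s}_k}^{\mathrm{T}}$, and summing over $k=1,\dots,K-1$ yields the second sum in \eqref{eq:EFIM_corollary}. Adding the LOS and NLOS contributions gives the claimed identity.

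The only real work is the verification that each template $\boldsymbol\Upsilon_{n,m}(\theta,\phi,\rho)$ appearing in the LOS part is actually rank one and that its outer-product vector aligns with the eigenvector stated in Proposition \ref{prop:LOS}. This is immediate once one exhibits a vector $\vec{a}$ with $\boldsymbol\Upsilon_{n,m}=\vec{a}\vec{a}^{\mathrm{T}}$; the remaining content of the corollary is a bookkeeping exercise that combines Proposition \ref{prop:LOS} and Theorem \ref{theorem:net_gain} through the additive structure \eqref{eq:eq_FIM}. No further estimates are needed.
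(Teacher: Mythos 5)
Your proposal is correct and follows essentially the same route as the paper: both invoke Proposition \ref{prop:LOS} and Theorem \ref{theorem:net_gain} to conclude that every summand in the decomposition \eqref{eq:eq_FIM} is a symmetric rank-one matrix, and then write each as $\lambda\vec{v}\vec{v}^{\mathrm{T}}$ and add. Your extra remark about verifying that each template $\boldsymbol\Upsilon_{n,m}$ is an outer product is a sensible (and slightly more careful) elaboration of what the paper leaves implicit, but it does not change the argument.
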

\begin{proof}
Considering Lemma \ref{prop:LOS} and Theorem \ref{theorem:net_gain}, we know that each matrix in \eqref{eq:eq_FIM} is rank one. Since any rank one matrix can be written as the outer-product of the unit-norm eigenvector with itself weighted by the only non-zero eigenvalue, \eqref{eq:EFIM_corollary} follows. 
\end{proof}

\textit{Interpretation:}
Note that the position and orientation of the receiver can be determined if $\vec{J}_{\tilde{\boldsymbol\eta}_{\vec{p},\alpha}}^{\mathrm{e}}$ is non-singular. From \eqref{eq:EFIM_corollary}, we observe that even in the absence of the LOS component, unambiguous position and orientation is estimation is possible if at least three NLOS components with distinct points of incidence contribute to the received signal.

For illustration purposes, we consider the position components and the orientation component of the eigenvalue-eigenvector product $\lambda_{\vec{s}_k}\mathbf{v}_{\mathbf{s}_k}$ separately. This can be interpreted as the projection of the eigenvalue-eigenvector product on the x-y plane (position components) and on the $\alpha$ line (orientation component), respectively. After a few algebraic manipulations, it can be shown that the length of the projected eigenvalue-eigenvector product in the x-y plane is given by
\begin{equation}
\begin{array}{cl}
\tilde{\lambda}_{\vec{s}_k}  &\triangleq  \lambda_{\vec{s}_k} \sqrt{v_{\vec{s}_k,x}^2+v_{\vec{s}_k,y}^2 }
\\
														& =
\frac{2}
{\left(1-\cos(\Delta\theta_k)\right)c^2\sigma^2_{\tau_k}
+\left(1+\cos(\Delta\theta_k)\right)\left(\left\|\mathbf{p}-\mathbf{s}_k\right\|^2\sigma^2_{\theta_{\mathrm{RX},k}}+\left\|\mathbf{q}-\mathbf{s}_k\right\|^2\sigma^2_{\theta_{\mathrm{TX},k}}\right)},
\end{array}
\label{eq:tilde_lambda2}
\end{equation}
where $v_{\vec{s}_k,x}$ and $v_{\vec{s}_k,y}$ denote the first and second component of $\vec{v}_{\vec{s}_k}$, respectively. The length of the projected eigenvalue-eigenvector product along the $\alpha$ line is given by 
\begin{equation}
\begin{array}{cl}
	\bar{\lambda}_{\mathbf{s}_k} &\triangleq  \lambda_{\vec{s}_k}  v_{\vec{s}_k,\alpha}
\\
														&=\frac{\left\|\mathbf{p}-\mathbf{s}_k\right\|^2 (1+\cos(\Delta\theta_k))}
{(1-\cos(\Delta\theta_k))c^2\sigma^2_{\tau_k}
+(1+\cos(\Delta\theta_k))(\left\|\mathbf{p}-\mathbf{s}_k\right\|^2\sigma^2_{\theta_{\mathrm{RX},k}}+\left\|\mathbf{q}-\mathbf{s}_k\right\|^2\sigma^2_{\theta_{\mathrm{TX},k}})},
\end{array}
\label{eq:bar_lambda}
\end{equation}
where $v_{\vec{s}_k,\alpha}$ denotes the third component of $\vec{v}_{\vec{s}_k}$. 

Finally, we define the position error as
\begin{equation}
\mathrm{PEB}= \sqrt{\mathrm{tr}\left\{\left[\left(\vec{J}_{\tilde{\boldsymbol\eta}_{\vec{p},\alpha}}^{\mathrm{e}}\right)^{-1}\right]_{1:2,1:2}\right\}}.
\label{eq:def_PEB}
\end{equation}
\subsection{Discussion and Implications}
Our findings provide insights into the problem of position and orientation estimation in 5G mmWave MIMO. Our results may turn out to be useful for designing position and orientation estimators. For instance, Theorem \ref{theorem:net_gain} revealed that every NLOS path provides position and orientation information, and hence every NLOS path reduces the PEB. To maximize the estimation accuracy, a position and orientation estimator should be designed such that all NLOS paths are considered. On the other hand, we know from Theorem \ref{theorem:net_gain} that the net NLOS information gain among paths can vary strongly, i.e. some NLOS paths provide significantly more information than others. To reduce the complexity of an estimator, paths with insignificant net NLOS information gain could be neglected by an estimator without considerable losses in terms of accuracy. For a given observation, the problem whether to consider or neglect a NLOS path is still an open challenge, which requires further research. Theorem \ref{theorem:net_gain} inherently implies that the position of the point of incidence of a NLOS path can be estimated. Thus, simultaneous localization and mapping (SLAM) can be performed with single transmission burst. Moreover, our results expose some explicit laws, e.g. \eqref{eq:non_zero_eigenvector} shows that a NLOS path, whose point of incidence is far away from the mobile terminal, contains mainly orientation information and only marginal position information. From Corollary \ref{corol:sum_of_rank_one}, it can be seen that in the absence of the LOS path, three NLOS paths can lead to a full rank EFIM. Hence NLOS-only position and orientation estimation are possible with soley a single anchor. 

\section{Numerical Example}
\label{sec:numerical_example}
In this section, we provide numerical examples for position and orientation estimation using mmWave MIMO with 5G-typical parameters. First, we describe the simulation setup. Subsequently, we present two examples, which 1) discuss the reduction of the PEB when a NLOS path complements the LOS path and 2) demonstrate LOS-free, single-anchor localization.  
\subsection{Simulation Setup}
We consider a mobile terminal that is located at $\vec{p}=[5, 5]^{\mathrm{T}}$ with an orientation angle of $\alpha=\pi/2$. In contrast to \cite{SZASW2017}, we consider uniform linear arrays (ULAs) with half-wavelength inter-element spacing at the transmitter and receiver that consist of $N_{\mathrm{TX}}$ and $N_{\mathrm{RX}}=25$ antennas, respectively. The operating carrier frequency is $f=38$ GHz. Regarding the pilot signal, we consider an ideal sinc pulse with $B=125$ MHz, $E_{\mathrm{s}}/T_{\mathrm{s}}=0$ dBm, $N_0=-170$dBm/Hz, and $N_{\mathrm{s}}=16$ symbols. We consider a DFT-based beamforming matrix with $N_{\mathrm{B}}=50$ beams uniformly spaced between $[0 ,\pi)$. The complex channel gains for each path are generated according to a geometric model\cite{G2005}. The gain is proportional to the path loss and the phase is uniformly distributed. In particular, we assume that $\left|h_0\right|^2=(\lambda /4\pi)^2/\left\|\vec{p}-\vec{q}\right\|^2$ for the LOS path and $\left|h_k\right|^2=(\lambda /4\pi)^2\Gamma_R/(\left\|\vec{q}-\vec{s}_k\right\|+\left\|\vec{p}-\vec{s}_k\right\|)^2, k=1,2$ for NLOS paths\footnote{For the numerical examples, we consider the point of incidence of a reflector as the source of the NLOS path.}, where $\Gamma_R=0.7$.
\subsection{Example 1}
\begin{figure*}[t]%
\centering
\hspace*{\fill}%
\begin{subfigure}{.48\columnwidth}
\centering
\includegraphics[width=\columnwidth]{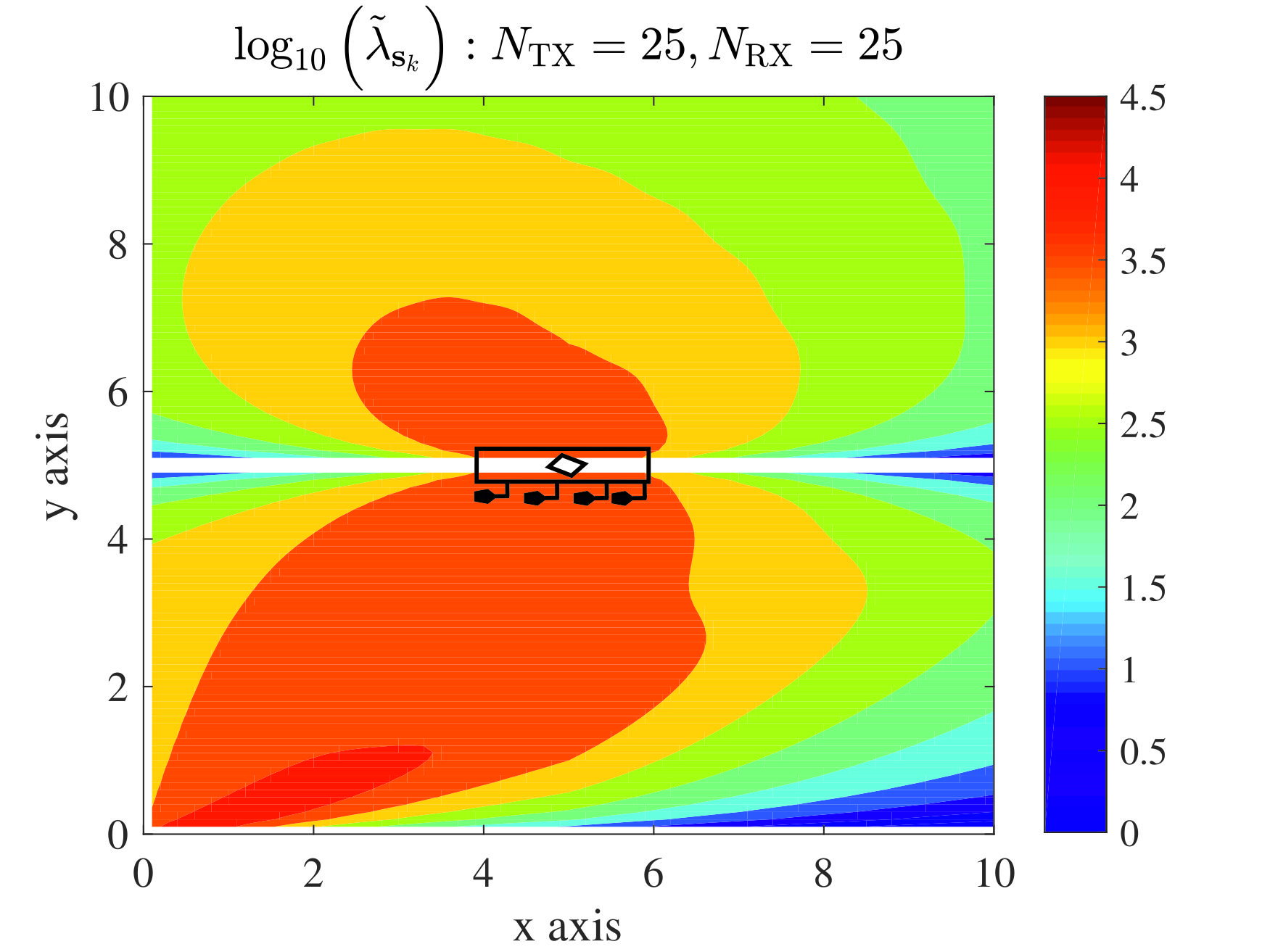}%
\label{fig:pos_info_heatmap_left}%
\end{subfigure}
\begin{subfigure}{.48\columnwidth}
\centering
\includegraphics[width=\columnwidth]{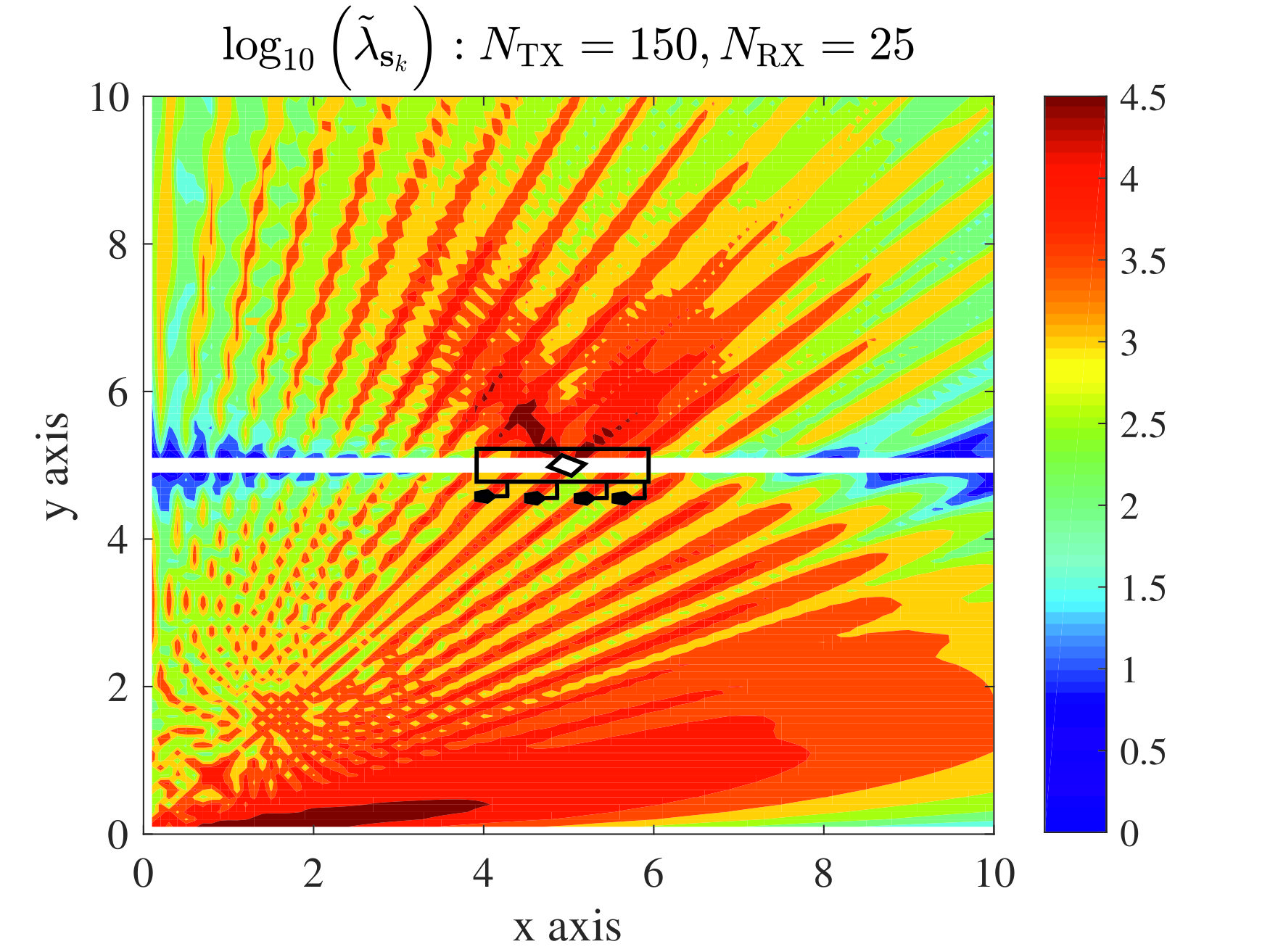}%
\label{fig:pos_info_heatmap_right}%
\end{subfigure}
\caption{\textit{Net position information gain - }  Narrower beamwidth caused by more transmit antennas (right) allows for larger net position information gain when compared to wider beamwidth (left).}
\label{fig:pos_info_heatmap}%
\end{figure*}
We consider the LOS path and one NLOS path in this example. The reflector, which causes the NLOS path, is moved in the x-y plane between ${0<s_{\mathrm{x},1}\leq 10}$ and ${0<s_{\mathrm{y},1}\leq 10}$. For every location of the point of incidence of the reflector $\vec{s}_1$, we determine the net position information gain $\tilde{\lambda}_{\vec{s}_1}$, which is depicted in log-scale in Fig. \ref{fig:pos_info_heatmap}. The array of the mobile terminal is shown in black. We consider different transmit array sizes to highlight the effect of the beamwidth on the position information gain. In particular, we choose $N_{\mathrm{TX}}=25$ (Fig. \ref{fig:pos_info_heatmap} - left) and $N_{\mathrm{TX}}=150$ (Fig. \ref{fig:pos_info_heatmap} - right). Wider beams of the former array result in a homogeneous illumination of the plane, which makes it more obvious to point out the location dependency of the point of incidence on the net position information gain. Three main conclusions can be drawn: 
\begin{enumerate}
	\item First, the geometry of the scenario has a significant impact on the position information gain. The results in Fig. \ref{fig:pos_info_heatmap} (left) confirm our findings from the analysis on the position information gain in section \ref{subsec:eff_MP_gain}. In particular, points of incidence that are close to the transmitter and receiver provide large information gains. In addition, certain angles $\Delta\theta_k$ invoke larger net position information gain than others. This can be deduced from the inhomogeneous color pattern in Fig. \ref{fig:pos_info_heatmap} (left). 
	\item  Secondly, the illumination of the plane with the transmitted beams has a major impact on the position information gain of NLOS components. Generally, narrower beams generated by larger apertures result in larger net position information gains. Thus, the points of incidence of NLOS paths should be illuminated with beams, which are as narrow as possible, in order to obtain the highest increase in the positioning accuracy.
	\item The fact that NLOS components provide information regarding the position of mobile terminal implies that the points of incidence of NLOS paths themselves can be estimated. Hence simultaneous localization and mapping (SLAM) can be conducted within a single snapshot. Thus, if SLAM is the goal on the system level, densely-spaced narrow beams are preferable over wide beams in order to accurately estimate the positions of the reflectors and achieve high localization accuracy of the mobile terminal. A synergy between communication and environmental mapping can be identified. When highly accurate maps of the environment have been obtained, e.g. in a discovery phase, they can be stored in a cloud database. Such a cloud database can, in turn, be used to support communication, since beams could intentionally be steered towards points of incidence to increase the capacity or reliably of the link. 
\end{enumerate} 
\begin{figure*}[t]%
\centering
\hspace*{\fill}%
\begin{subfigure}{.48\columnwidth}
\centering
\includegraphics[width=\columnwidth]{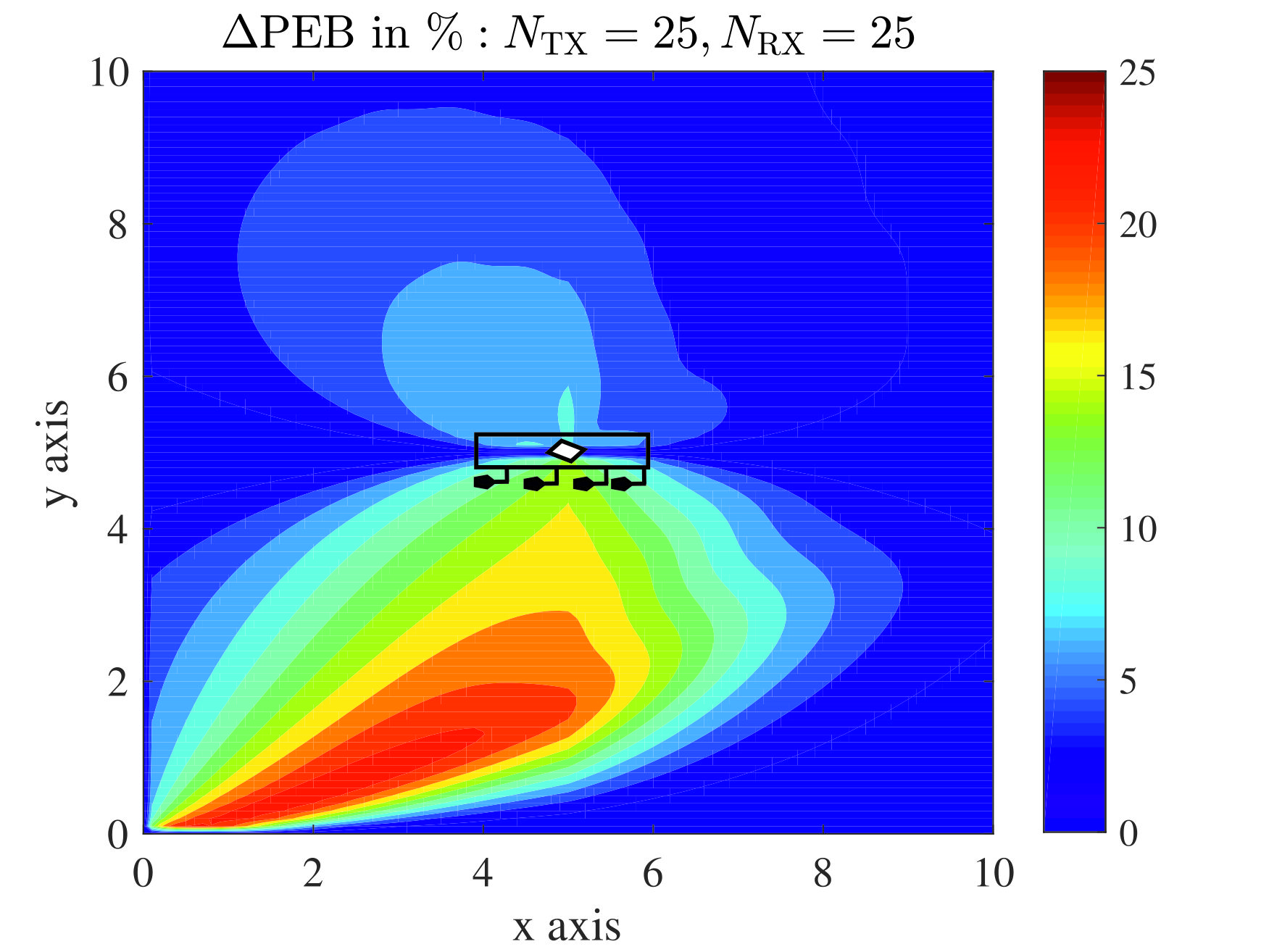}%
\label{fig:PEB_NTX_25_NRX_150_NB_50}%
\end{subfigure}
\begin{subfigure}{.48\columnwidth}
\centering
\includegraphics[width=\columnwidth]{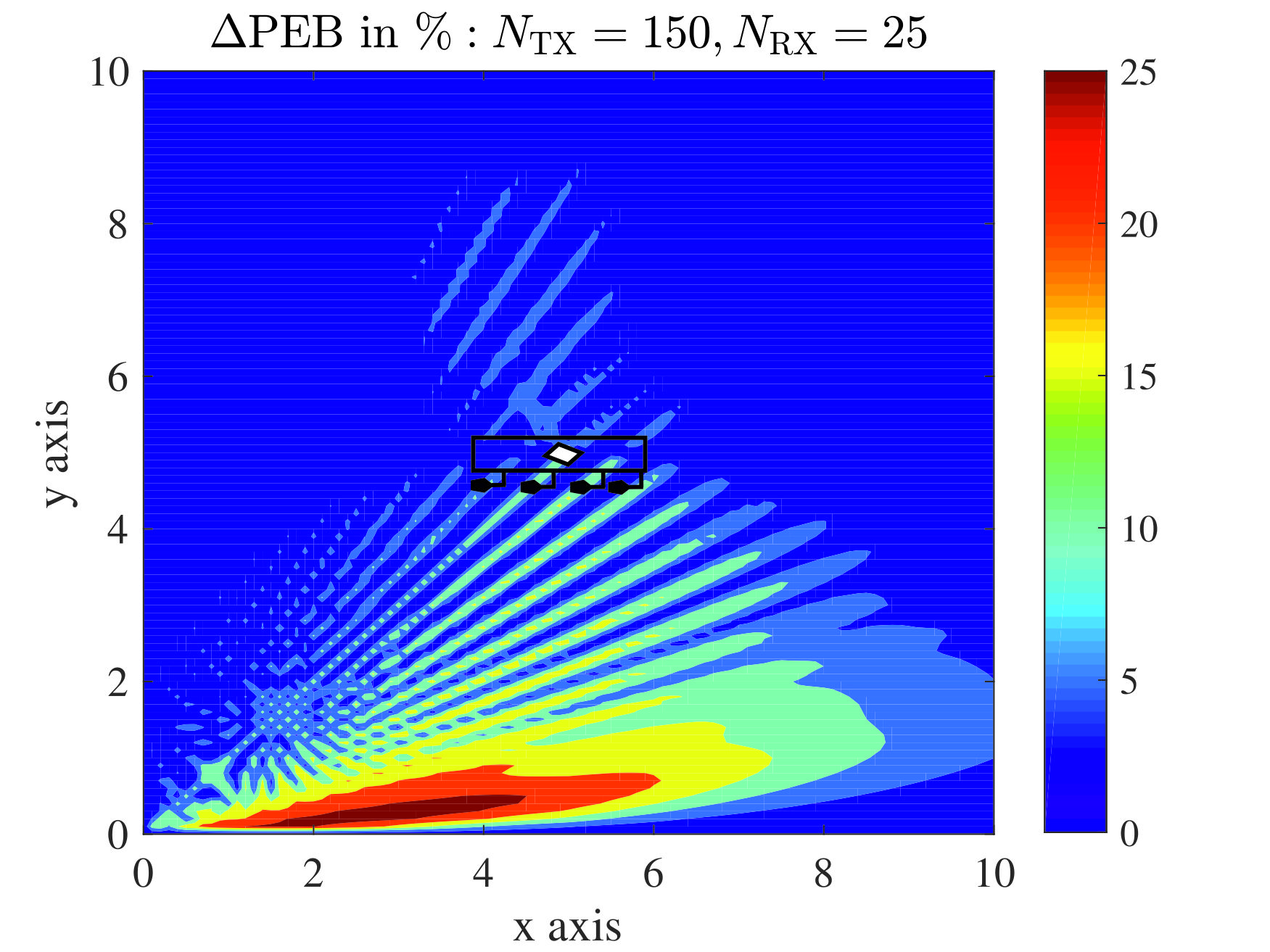}%
\label{fig:PEB_NTX_150_NRX_150_NB_50}%
\end{subfigure}
\caption{\textit{Reduction of the PEB -} The patterns of the reductions of the PEBs closely resemble the patterns of the net position information gains in Fig. \ref{fig:pos_info_heatmap}.}
\label{fig:PEB_reduction}%
\end{figure*}
Even though, we did not analytically show the reduction of the PEB in the presence of a NLOS path, we provide a numerical analysis. To that end, we determine the reduction of the position error bound $\Delta \mathrm{PEB}$ in the presence of a single NLOS paths and depict a heat map in Fig. \ref{fig:PEB_reduction}. We observe that the presence of the reflector reduces the PEB by up to $25\%$. Narrower beams (Fig. \ref{fig:PEB_reduction} - right) result in larger reductions of the PEB. Note that the patterns in Fig. \ref{fig:PEB_reduction} closely resemble the patterns of the net position information gain in Fig. \ref{fig:pos_info_heatmap}, i.e. when the net position information gain is large, the reduction of the PEB is also large. 
\subsection{Example 2}
We assume that the transmitter and the receiver are equipped with $N_{\mathrm{TX}}=N_{\mathrm{TX}}=25$, and the plane is illuminated with $N_{\mathrm{B}}=50$ beams. We assume two different conditions regarding the received paths: 1) the LOS path is present and 2) the LOS path is obstructed. In the former scenario, there are no NLOS paths, while in the latter scenario we consider three NLOS paths. Reflectors which cause these NLOS paths are located at $\vec{s}_1=[8, 1]^{\mathrm{T}}$, $\vec{s}_2=[3, 4]^{\mathrm{T}}$, and  $\vec{s}_3=[6, 8]^{\mathrm{T}}$. The left and right part of Fig. \ref{fig:p_gain_n_direct} depict the first and second scenario, respectively. The direction of information (eigenvector) is indicated by normalized colored arrows, while the (net) position information gain (eigenvalue) is presented in text boxes attached to the arrows. The corresponding PEB is depicted in the title of the respective sub-figure. We can deduce from Fig. \ref{fig:p_gain_n_direct} (left) that unambiguous single-anchor position and orientation estimation is possible. Fig. \ref{fig:p_gain_n_direct} (right) demonstrates that, even in the absence of LOS, three NLOS paths lead to a full rank EFIM in \eqref{eq:EFIM_corollary}, which reflects unambiguous position and orientation estimation.  Note that there is a degradation of the PEB of approximately one decade when the LOS path is replaced by 3 NLOS paths. Nonetheless, the position of the mobile terminal can determined accurately based on NLOS paths only. 
\begin{figure*}[t]%
\centering
\hspace*{\fill}%
\begin{subfigure}{.48\columnwidth}
\centering
\includegraphics[width=\columnwidth]{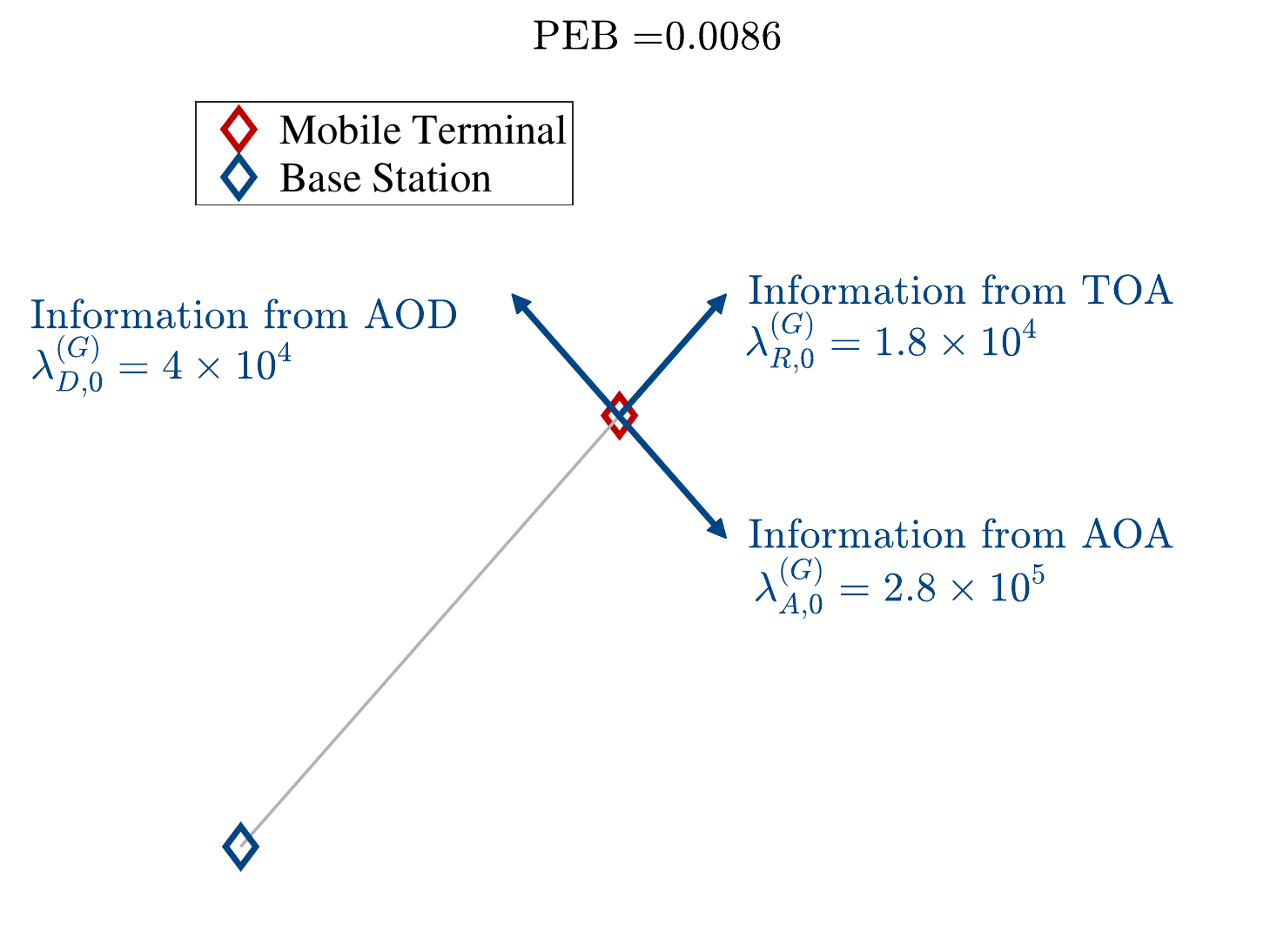}%
\label{fig:LOS_p_gain_n_direct}%
\end{subfigure}
\begin{subfigure}{.48\columnwidth}
\centering
\includegraphics[width=\columnwidth]{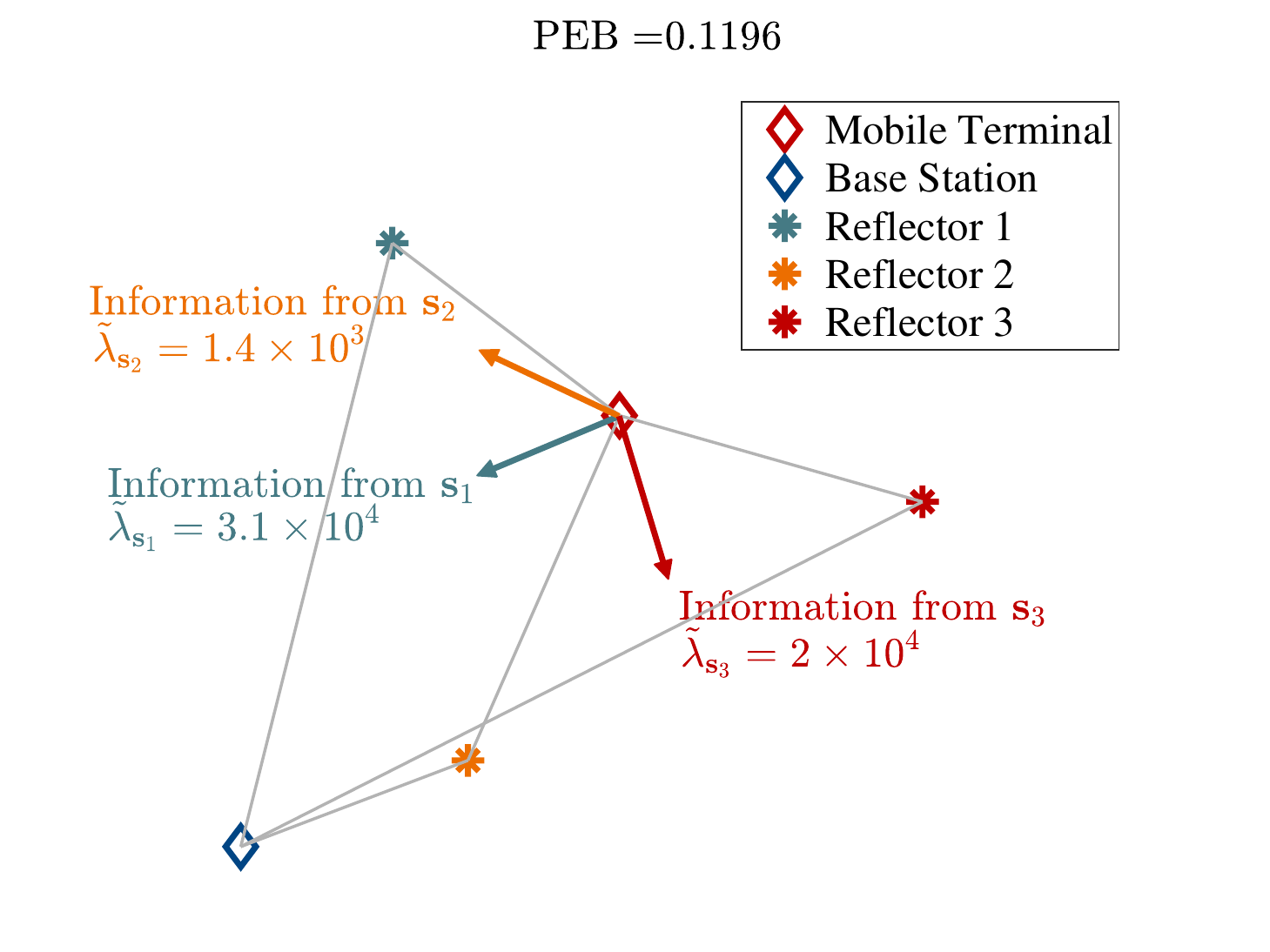}%
\label{fig:NLOS_p_gain_n_direct}%
\end{subfigure}
\caption{Left: position information gains and directions originating from TOA, AOD, and AOA of the LOS path. Right: net position information gains and directions of 3 NLOS paths. Accurate positioning is possible even in the absence of the LOS path.}
\label{fig:p_gain_n_direct}%
\end{figure*}
\section{Conclusion}
\label{sec:conclusion}
We analyzed the role of NLOS components in 5G mmWave MIMO systems in terms of their position and orientation estimation capabilities. For our analysis, we employed the concept of Fisher information in order to show that NLOS components provide significant Fisher information if and only if angle-of-arrival, angle-of-departure, and time-of-arrival of the corresponding path can be estimated accurately. We showed analytically that each NLOS component contributes one dimensional Fisher information regarding the position and orientation. Hence NLOS components can be harnessed to increase position and orientation accuracy. We showed that even in the absence of the LOS component, unambiguous position and orientation estimation is feasible if at least three NLOS paths contribute to the received signal. We showed in our analysis that the amount of gained information strongly depends on the relative position of the mobile terminal, the base station, and the points of incidence, as well as the illumination of the point of incidence by base station. We pointed out that narrow beams increase the position and orientation information gain of NLOS components when compared to wider beams.
\appendices
\section{Entries of the Transformation Matrix}
\label{sec:appendix_A}
The partial derivates in the matrix $\vec{T}$ are given by \cite{SGDSW2017}
\begin{subequations}
\begin{eqnarray}
\frac{\partial\tau_0}{\partial\vec{p}} &=& \frac{1}{c}[\cos(\theta_{\mathrm{TX},0}),\sin(\theta_{\mathrm{TX},0})]^{\mathrm{T}},
\label{eq:dtau_0_dp}\\
\frac{\partial\theta_{\mathrm{TX},0}}{\partial\vec{p}} &=& \frac{1}{\left\|\vec{p}-\vec{q}\right\|_2}[-\sin(\theta_{\mathrm{TX},0}),\cos(\theta_{\mathrm{TX},0})]^{\mathrm{T}},
\label{eq:dtheta_TX0_dp}\\
\frac{\partial\theta_{\mathrm{RX},0}}{\partial\vec{p}} &=&\frac{1}{\left\|\vec{p}-\vec{q}\right\|_2}[-\sin(\theta_{\mathrm{TX},0}),\cos(\theta_{\mathrm{TX},0})]^{\mathrm{T}},
\label{eq:dtheta_RX0_dp}\\
\frac{ \partial\tau_k}{\partial\vec{p}} &=& \frac{1}{c}[\cos(\pi-\theta_{\mathrm{RX},k}),-\sin(\pi-\theta_{\mathrm{RX},k})]^{\mathrm{T}}, k>0,
\label{eq:dtau_k_dp}\\
\frac{\partial\theta_{\mathrm{RX},k}}{\partial\vec{p}} &=& \frac{1}{\left\|\vec{p}-\vec{s}_k\right\|_2}[\sin(\pi-\theta_{\mathrm{RX},k}),\cos(\pi-\theta_{\mathrm{RX},k})]^{\mathrm{T}}, k>0,
\label{eq:dtheta_RXk_dp}\\
\frac{\partial\tau_k}{\partial\vec{s}_k} &=& \frac{1}{c}[\cos(\theta_{\mathrm{TX},k})+\cos(\theta_{\mathrm{RX},k}),\sin(\theta_{\mathrm{TX},k})+\sin(\theta_{\mathrm{RX},k})]^{\mathrm{T}}, k>0,
\label{eq:dtau_k_dsk}\\
\frac{\partial\theta_{\mathrm{TX},k}}{\partial\vec{s}_k} &=& \frac{1}{\left\|\vec{q}-\vec{s}_k\right\|_2}[-\sin(\theta_{\mathrm{TX},k}),\cos(\theta_{\mathrm{TX},k})]^{\mathrm{T}}, k>0,
\label{eq:dtheta_TXk_dsk}\\
\frac{\partial\theta_{\mathrm{RX},k}}{\partial\vec{s}_k} &=& -\frac{1}{\left\|\vec{p}-\vec{s}_k\right\|_2}[\sin(\pi-\theta_{\mathrm{RX},k}),\cos(\pi-\theta_{\mathrm{RX},k})]^{\mathrm{T}}, k>0,
\label{eq:dtheta_RXk_dsk}\\
\frac{\partial\theta_{\mathrm{RX},k}}{\partial\alpha} &=& -1, k\geq 0.
\label{eq:dtheta_RXk_dalpha}
\end{eqnarray}
\end{subequations}
All other partial derivates are zero. Considering \eqref{eq:dtau_0_dp}-\eqref{eq:dtheta_RXk_dalpha}, we find 
\begin{equation}
\vec{T}_{\vec{P}}^{(0)} = 
\left[
\begin{array}{ccc}
	\frac{1}{c}\cos(\theta_{\mathrm{TX},0}) & -\frac{1}{\left\|\vec{p}-\vec{q}\right\|_2}\sin(\theta_{\mathrm{TX},0}) & -\frac{1}{\left\|\vec{p}-\vec{q}\right\|_2}\sin(\theta_{\mathrm{TX},0}) 
	\\
	\frac{1}{c}\sin(\theta_{\mathrm{TX},0}) & \frac{1}{\left\|\vec{p}-\vec{q}\right\|_2}\cos(\theta_{\mathrm{TX},0}) &\frac{1}{\left\|\vec{p}-\vec{q}\right\|_2}\cos(\theta_{\mathrm{TX},0}) 
	\\
	0 												& 0 																																& -1
\end{array}
\right].
\label{eq:mat_T_P_0}
\end{equation}
In addition, we find 
\begin{equation}
\vec{T}_{\vec{P}}^{(k)} = 
\left[
\begin{array}{ccc}
	\frac{1}{c}\cos(\pi-\theta_{\mathrm{RX},k}) & 0 & -\frac{1}{\left\|\vec{p}-\vec{s}_k\right\|_2}\sin(\pi-\theta_{\mathrm{RX},k}) 
	\\
	-\frac{1}{c}\sin(\pi-\theta_{\mathrm{RX},k}) & 0 &\frac{1}{\left\|\vec{p}-\vec{s}_k\right\|_2}\cos(\pi-\theta_{\mathrm{RX},k}) 
	\\
	0 															& 0 & -1
\end{array}
\right]
\label{eq:mat_T_P_k}
\end{equation}
and, finally, 
\begin{equation}
\vec{T}_{\vec{s}_k} = 
\left[
\begin{array}{ccc}
	\frac{1}{c}\left[\cos(\theta_{\mathrm{TX},k})+\cos(\theta_{\mathrm{RX},k}) \right]& -\frac{1}{\left\|\vec{q}-\vec{s}_k\right\|_2}\sin(\theta_{\mathrm{TX},k}) & -\frac{1}{\left\|\vec{p}-\vec{s}_k\right\|_2}\sin(\pi-\theta_{\mathrm{RX},k})
	\\
	\frac{1}{c}\left[\sin(\theta_{\mathrm{TX},k})+\sin(\theta_{\mathrm{RX},k}) \right]& \frac{1}{\left\|\vec{q}-\vec{s}_k\right\|_2}\cos(\theta_{\mathrm{TX},k})  &-\frac{1}{\left\|\vec{p}-\vec{s}_k\right\|_2}\cos(\pi-\theta_{\mathrm{RX},k})
\end{array}
\right],
\label{eq:mat_T_s_k}
\end{equation}
for $k>0$.
\section{Eigenvalues and Eigenvectors of the LOS Information Gain Matrices}
\label{sec:appendix_B}
Since the matrices in \eqref{eq:EFIM_gain_LOS} are rank one, the eigenvalues are given by the traces of the respective matrices \cite{L2004}. Hence \eqref{eq:LOS_eval_R}-\eqref{eq:LOS_eval_A} follow.  

Having obtained the eigenvalues of the matrices $\tilde{\vec{A}}_R^{\mathrm{(G)}}$,$\tilde{\vec{A}}_D^{\mathrm{(G)}}$, and $\tilde{\vec{A}}_A^{\mathrm{(G)}}$, it is straightforward to see that the vectors $\vec{v}_{R,0}^{\mathrm{(G)}}$,$\vec{v}_{D,0}^{\mathrm{(G)}}$, and $\vec{v}_{A,0}^{\mathrm{(G)}}$ in \eqref{eq:LOS_evec_R}-\eqref{eq:LOS_evec_A} are in null space of $(\lambda_R^{\mathrm{(G)}}\vec{I} -\tilde{\vec{A}}_R^{\mathrm{(G)}})$, $(\lambda_D^{\mathrm{(G)}}\vec{I}- \tilde{\vec{A}}_D^{\mathrm{(G)}})$, and $(\lambda_A^{\mathrm{(G)}}\vec{I} -\tilde{\vec{A}}_A^{\mathrm{(G)}})$, respectively, i.e. 
\begin{subequations}
\begin{alignat}{3}
\left(\lambda_R^{\mathrm{(G)}}\vec{I} -\tilde{\vec{A}}_R^{\mathrm{(G)}}\right)\vec{v}_{R,0}^{\mathrm{(G)}}&=\vec{0},
\\ 
\left(\lambda_D^{\mathrm{(G)}}\vec{I}- \tilde{\vec{A}}_D^{\mathrm{(G)}}\right)\vec{v}_{D,0}^{\mathrm{(G)}}&=\vec{0},
\\
\left(\lambda_A^{\mathrm{(G)}}\vec{I}- \tilde{\vec{A}}_A^{\mathrm{(G)}}\right)\vec{v}_{A,0}^{\mathrm{(G)}}&=\vec{0}.
\end{alignat}
\end{subequations}
\section{Decomposition of the NLOS Information Loss}
\label{sec:appendix_C}
In the following, we will derive the decomposition of the terms $\tilde{\vec{B}}^{\mathrm{(G)}}$, $\tilde{\vec{B}}^{\mathrm{(L)}}$, and 
\newline $\tilde{\vec{B}}^{\mathrm{(N)}}=\tilde{\vec{B}}^{\mathrm{(G)}}-\tilde{\vec{B}}^{\mathrm{(L)}}$.
\subsection{NLOS Information Gain}
\label{sec:appendix_C_MP_gain}
The NLOS information gain is given by 
\begin{equation}
\begin{array}{rl}
\tilde{\vec{B}}^{\mathrm{(G)}}&\triangleq  \mathbf{B}[\mathbf{J}_{{\boldsymbol\eta}}]_{4:3K,4:3K}\mathbf{B}^{\mathrm{T}} \\
& = 
[\vec{T}_{\vec{P}}^{(1)}, \vec{T}_{\vec{P}}^{(2)},...,\vec{T}_{\vec{P}}^{(K-1)}] [\mathbf{J}_{{\boldsymbol\eta}}]_{4:3K,4:3K}[\vec{T}_{\vec{P}}^{(1)}, \vec{T}_{\vec{P}}^{(2)},...,\vec{T}_{\vec{P}}^{(K-1)}]^{\mathrm{T}}.	
\end{array}
\label{eq:mp_gain_derv1}
\end{equation}
Since 
\begin{equation}
[\mathbf{J}_{{\boldsymbol\eta}}]_{4:3K,4:3K}=\text{diag}[1/\sigma^2_{\tau_1},1/\sigma^2_{\theta_{\mathrm{TX},1}},1/\sigma^2_{\theta_{\mathrm{RX},1}},...,1/\sigma^2_{\tau_{K-1}},1/\sigma^2_{\theta_{\mathrm{TX},K-1}},1/\sigma^2_{\theta_{\mathrm{RX},K-1}}],
\label{eq:J_eta_mp}
\end{equation}
we find
\begin{equation}
\begin{array}{cl}
	\tilde{\vec{B}}^{\mathrm{(G)}}&= \sum_{k=1}^{K-1} \vec{T}_{\vec{P}}^{(k)}\vec{J}_{\bar{\boldsymbol\eta}_k} \left(\vec{T}_{\vec{P}}^{(k)}\right)^{\mathrm{T}} \\
&= \sum_{k=1}^{K-1}\frac{1}{\sigma^2_{\tau_k}c^2} \boldsymbol\Upsilon_{0,0}(\theta_{\mathrm{RX},k},0)
	+\frac{1}{\sigma^2_{\theta_{\mathrm{RX},k}}\left\|\vec{p}-\vec{s}_k\right\|^2} \boldsymbol\Upsilon_{1,1}(\theta_{\mathrm{RX},k},\left\|\vec{p}-\vec{q}\right\|). 
\end{array}
\label{eq:mp_gain_derv2}
\end{equation}

\subsection{NLOS Information Loss}
\label{sec:appendix_C_MP_loss}
It is easy to verify that $\vec{D}[\vec{J}_{{\bar{\boldsymbol\eta}}}]_{4:3K,4:3K}\vec{D}^{\mathrm{T}}$ is block diagonal since $\vec{D}$ is block diagonal and from \eqref{eq:J_eta_mp} we recall that $[\vec{J}_{{\bar{\boldsymbol\eta}}}]_{4:3K,4:3K}$ is diagonal. Hence 
\begin{equation}
\mathbf{D}[\mathbf{J}_{{\boldsymbol\eta}}]_{4:3K,4:3K}\mathbf{D}^{\mathrm{T}}=
\begin{bmatrix}
\mathbf{T}_{\mathbf{s}_1}\vec{J}_{\bar{\boldsymbol\eta}_1}\mathbf{T}_{\mathbf{s}_1}^{\mathrm{T}}
&\hdots &
\mathbf{0}_{2\times2}
\\
\vdots&
\ddots&
\vdots
\\
\mathbf{0}_{2\times2} &
\hdots &
\mathbf{T}_{\mathbf{s}_{K-1}}\vec{J}_{\bar{\boldsymbol\eta}_{K-1}}\mathbf{T}_{\mathbf{s}_{K-1}}^{\mathrm{T}}
\\
\end{bmatrix}.
\label{eq:middle_term_mp_loss_derv1}
\end{equation}
For the compactness of notation, we use the following shorthand 
\newline $\tilde{\vec{T}}_{\vec{s}_k\vec{s}_k}\triangleq \mathbf{T}_{\mathbf{s}_{k}}\vec{J}_{\bar{\boldsymbol\eta}_{k}}\mathbf{T}_{\mathbf{s}_{k}}^{\mathrm{T}}$ The inverse of a block diagonal matrix is determined by inverting the blocks on the diagonal. The $k^{\text{th}}$ block is given by
\begin{equation}
\tilde{\vec{T}}_{\vec{s}_k\vec{s}_k} \triangleq 
\left[
\begin{array}{cc}
 a_k & b_k  \\
 b_k & d_k	
\end{array}
\right],
\label{eq:middle_term_mp_loss_derv2}
\end{equation}
where 
\begin{align*}
a_k \triangleq& \frac{(\cos(\theta_{\mathrm{TX},k})+\cos(\theta_{\mathrm{RX},k}))^2}{\sigma^2_{\tau_k}c^2}+\frac{\sin^2(\theta_{\mathrm{TX},k})}{\sigma^2_{\theta_{\mathrm{TX},k}}\left\|\mathbf{q} - \mathbf{s_k}\right\|^2} +\frac{\sin^2(\theta_{\mathrm{RX},k})}{\sigma^2_{\theta_{\mathrm{RX},k}}\left\|\mathbf{p} - \mathbf{s_k}\right\|^2}
\\
b_k \triangleq &\frac{(\cos(\theta_{\mathrm{TX},k})+\cos(\theta_{\mathrm{RX},k}))(\sin(\theta_{\mathrm{TX},k})+\sin(\theta_{\mathrm{RX},k}))}{\sigma^2_{\tau_k}c^2}
\\
& -\frac{\sin(\theta_{\mathrm{TX},k})\cos(\theta_{\mathrm{TX},k})}{\sigma^2_{\theta_{\mathrm{TX},k}}\left\|\mathbf{q} - \mathbf{s_k}\right\|^2}-\frac{\sin(\theta_{\mathrm{RX},k})\cos(\theta_{\mathrm{RX},k})}{\sigma^2_{\theta_{\mathrm{RX},k}}\left\|\mathbf{p} - \mathbf{s_k}\right\|^2} 
\\
d_k \triangleq &\frac{(\sin(\theta_{\mathrm{TX},k})+\sin(\theta_{\mathrm{RX},k}))^2}{\sigma^2_{\tau_k}c^2}+\frac{\cos^2(\theta_{\mathrm{TX},k})}{\sigma^2_{\theta_{\mathrm{TX},k}}\left\|\mathbf{q} - \mathbf{s_k}\right\|^2} +\frac{\cos^2(\theta_{\mathrm{RX},k})}{\sigma^2_{\theta_{\mathrm{RX},k}}\left\|\mathbf{p} - \mathbf{s_k}\right\|^2}.
\end{align*}
Thus, the inverse of the $k^{\text{th}}$ block is given by
\begin{equation}
\tilde{\vec{T}}_{\vec{s}_k\vec{s}_k}^{-1} 
=
\frac{1}{\left|\tilde{\vec{T}}_{\vec{s}_k\vec{s}_k}\right|}
\left[
\begin{array}{cc}
 d_k & -b_k  \\
 -b_k & a_k	
\end{array}
\right],
\label{eq:middle_term_mp_loss_derv3}
\end{equation}
where $\left|\tilde{\vec{T}}_{\vec{s}_k\vec{s}_k} \right|=a_k d_k - b_k^2$.

The left term $\vec{B}[\vec{J}_{{\bar{\boldsymbol\eta}}}]_{4:3K,4:3K}\vec{D}^{\mathrm{T}}$ and the right term $\vec{D}[\vec{J}_{{\bar{\boldsymbol\eta}}}]_{4:3K,4:3K}\vec{B}^{\mathrm{T}}$ in \eqref{eq:MP_loss_as_sum} can be evaluated by straightforward matrix-matrix multiplications. Observe that the left term can be obtained by taking the transpose of the right term. Due to the block diagonal structure of $\vec{D}$, we find 
\begin{equation}
\mathbf{D}[\mathbf{J}_{{\bar{\boldsymbol\eta}}}]_{4:3K,4:3K}\mathbf{B}^{\mathrm{T}}=
\begin{bmatrix}
\mathbf{T}_{\mathbf{s}_1}\vec{J}_{\bar{\boldsymbol\eta}_{1}}\left(\mathbf{T}_{\mathbf{P}}^{(1)}\right)^{\mathrm{T}} &
\\
\vdots
\\
\mathbf{T}_{\mathbf{s}_{K-1}}\vec{J}_{\bar{\boldsymbol\eta}_{K-1}}\left(\mathbf{T}_{\mathbf{P}}^{(K-1)}\right)^{\mathrm{T}}
\\
\end{bmatrix}.
\label{eq:right_term_mp_loss_derv1}
\end{equation}
We introduce a similar shorthand as above, where $\tilde{\vec{T}}_{\vec{s}_k\vec{P}}\triangleq \mathbf{T}_{\mathbf{s}_{k}}\vec{J}_{\bar{\boldsymbol\eta}_{k}}\left(\mathbf{T}_{\mathbf{P}}^{(k)}\right)^{\mathrm{T}}$. Considering \eqref{eq:middle_term_mp_loss_derv3} and \eqref{eq:right_term_mp_loss_derv1}, we deduce
\begin{equation}
\tilde{\vec{B}}^{\mathrm{(L)}}
= 
\sum_{k=1}^{K-1}    \tilde{\vec{T}}_{\vec{s}_k\vec{P}}^{\mathrm{T}}  \tilde{\vec{T}}_{\vec{s}_k\vec{s}_k}^{-1}\tilde{\vec{T}}_{\vec{s}_k\vec{P}}.
\label{eq:right_term_mp_loss_derv2}
\end{equation}
Let us define the following weights which are used in \eqref{eq:EFIM_loss_MP}:\newline
\begin{subequations}
\begin{alignat}{3}
w_{R,k}^{\mathrm{(L)}}&\triangleq \left(\frac{1+\cos(\Delta\theta_k)}{\sigma_{\tau_k}^2c^2}\right)^2\left(\frac{1}{\sigma^2_{\theta_{\mathrm{RX},k}}\left\|\mathbf{p} - \mathbf{s_k}\right\|^2}+\frac{1}{\sigma^2_{\theta_{\mathrm{TX},k}}\left\|\mathbf{q} - \mathbf{s_k}\right\|^2}\right)\frac{1}{a_k d_k - b_k^2}, 
\\
w_{A,k}^{\mathrm{(L)}}&\triangleq \left(\frac{\left(1+\cos(\Delta\theta_k)\right)^2}{\sigma_{\tau_k}^2c^2\sigma^4_{\theta_{\mathrm{RX},k}}\left\|\mathbf{p} - \mathbf{s_k}\right\|^4} +\frac{\sin^2(\Delta\theta_k)}{\sigma^2_{\theta_{\mathrm{TX},k}}\left\|\mathbf{q} - \mathbf{s_k}\right\|^2\sigma^4_{\theta_{\mathrm{RX},k}}\left\|\mathbf{p} - \mathbf{s_k}\right\|^4}\right)\frac{1}{a_k d_k - b_k^2},
\\
\gamma_{\vec{s}_k}&\triangleq \frac{(1+\cos(\Delta\theta_k))\sin(\Delta\theta_k)}{\sigma^2_{\theta_{\mathrm{TX},k}}\left\|\mathbf{q} - \mathbf{s_k}\right\|^2\sigma^2_{\theta_{\mathrm{RX}, k}}\left\|\mathbf{p} - \mathbf{s_k}\right\|^2\sigma^2_{\tau_k} c^2}\frac{1}{a_k d_k - b_k^2}.
\end{alignat}
\end{subequations}
In addition, we define 
\begin{equation}
	\vec{B}_{k}^{\mathrm{(L)}}
	\triangleq
	\begin{bmatrix}
-2\sin(\theta_{\mathrm{RX},k})\cos(\theta_{\mathrm{RX},k})& 
\cos^2(\theta_{\mathrm{RX},k})-\sin^2(\theta_{\mathrm{RX},k})& 
\cos(\theta_{\mathrm{RX},k})\left\|\mathbf{p}-\mathbf{s}_k\right\| \\
\cos^2(\theta_{\mathrm{RX},k})-\sin^2(\theta_{\mathrm{RX},k}) & 
2\sin(\theta_{\mathrm{RX},k})\cos(\theta_{\mathrm{RX},k}) &
\sin(\theta_{\mathrm{RX},k})\left\|\mathbf{p}-\mathbf{s}_k\right\|\\
\cos(\theta_{\mathrm{RX},k})\left\|\mathbf{p}-\mathbf{s}_k\right\| & 
\sin(\theta_{\mathrm{RX},k})\left\|\mathbf{p}-\mathbf{s}_k\right\| &
0\\
\end{bmatrix}.
\label{eq:MP_eff_loss_RAD}
\end{equation}
Using straightforward matrx-matrix multiplication, \eqref{eq:EFIM_loss_MP} is readily obtained.
\subsection{Net NLOS Gain}
\label{sec:appendix_C_MP_eff_gain}
We collect all information associated with the $k^{\text{th}}$ NLOS component in the matrix $\boldsymbol\Psi_{\mathbf{s}_k}$, i.e.
\begin{equation}
\begin{array}{cl}
\boldsymbol\Psi_{\mathbf{s}_k} 
&\triangleq
\epsilon_{\mathbf{s}_k} \boldsymbol\Upsilon_{0,0}(\theta_{\mathrm{TX},0},0)+\beta_{\mathbf{s}_k}\boldsymbol\Upsilon_{1,1}(\theta_{\mathrm{RX},k},\left\|\vec{p}-\vec{q}\right\|)
	+ \gamma_{\mathbf{s}_k} \vec{B}_{k}^{\mathrm{(G)}}
	\\
	&\triangleq
 [\boldsymbol\psi_{\mathbf{s}_k,1},\boldsymbol\psi_{\mathbf{s}_k,2},\boldsymbol\psi_{\mathbf{s}_k,3}].
\label{eq:psi_sk}
\end{array}
\end{equation}
We will now show that $\boldsymbol\Psi_{\mathbf{s}_k}$ has only one non-zero eigenvalue by showing that $\boldsymbol\Psi_{\mathbf{s}_k}$ is rank one, i.e. $\boldsymbol\Psi_{\mathbf{s}_k}$ has only one linearly independent column. By inspection of the columns of $\boldsymbol\Psi_{\mathbf{s}_k}$, it is straightforward to show that  
\begin{equation}
a\boldsymbol\psi_{\mathbf{s}_k,1}
-
b\boldsymbol\psi_{\mathbf{s}_k,2}
=
\begin{bmatrix}
\cos(\theta_{\mathrm{RX},k})\left\|\mathbf{p}-\mathbf{s}_k\right\|(\epsilon_{\mathbf{s}_k}\beta_{\mathbf{s}_k}-\gamma_{\mathbf{s}_k}^2)
\\
\sin(\theta_{\mathrm{RX},k})\left\|\mathbf{p}-\mathbf{s}_k\right\|(\epsilon_{\mathbf{s}_k}\beta_{\mathbf{s}_k}-\gamma_{\mathbf{s}_k}^2)
\\
0
\end{bmatrix},
\label{eq:lin_dep_12_detailed}
\end{equation}
where $a\triangleq(\beta_{\mathbf{s}_k}\cos(\theta_{\mathrm{RX},k})\left\|\mathbf{p}-\mathbf{s}_k\right\|+\gamma_{\mathbf{s}_k}\sin(\theta_{\mathrm{RX},k})\left\|\mathbf{p}-\mathbf{s}_k\right\|)$ and 
\newline $b\triangleq(-\beta_{\mathbf{s}_k}\sin(\theta_{\mathrm{RX},k})\left\|\mathbf{p}-\mathbf{s}_k\right\|+\gamma_{\mathbf{s}_k}\cos(\theta_{\mathrm{RX},k})\left\|\mathbf{p}-\mathbf{s}_k\right\|)$. Using simple algebra it can be seen that
\begin{equation}
(\epsilon_{\mathbf{s}_k}\beta_{\mathbf{s}_k}-\gamma_{\mathbf{s}_k}^2)=0.
\label{eq:eps_beta_gamma}
\end{equation} 
Hence $\boldsymbol\psi_{\mathbf{s}_k,1}$ and $\boldsymbol\psi_{\mathbf{s}_k,2}$ are linearly dependent. All other combinations of the columns of $\boldsymbol\Psi_{\mathbf{s}_k}$ can be shown to linearly dependent in the same fashion. Hence $\boldsymbol\Psi_{\mathbf{s}_k}$ is rank one and has one non-zero eigenvalue. The only non-zero eigenvalue is given by the trace of $\boldsymbol\Psi_{\mathbf{s}_k}$ \cite{L2004}
\begin{equation}
\lambda_{\mathbf{s}_k}=\epsilon_{\mathbf{s}_k}+\beta_{\mathbf{s}_k}\left(1+\left\|\vec{p}-\vec{s}_k\right\|\right).
\label{eq:non_zero_eval_proof}
\end{equation}
Using the results for $\epsilon_{\vec{s}_k}$, and $\beta_{\vec{s}_k}$, as well as and straightforward algebraic manipulations, the main result of this paper in \eqref{eq:non_zero_eigenvalue} is readily obtained.
That being said, it can be seen that 
\begin{equation}
\mathbf{v}_{\mathbf{s}_k}=
\begin{bmatrix}
-\frac{1}{\left\|\mathbf{p}-\mathbf{s}_k\right\|}\left(\frac{\epsilon_{\mathbf{s}_k}}{\gamma_{\mathbf{s}_k}}\cos(\theta_{\mathrm{RX},k})+ \sin(\theta_{\mathrm{RX},k})\right) &
\\
\frac{1}{\left\|\mathbf{p}-\mathbf{s}_k\right\|}\left(-\frac{\epsilon_{\mathbf{s}_k}}{\gamma_{\mathbf{s}_k}}\sin(\theta_{\mathrm{RX},k})+ \cos(\theta_{\mathrm{RX},k})\right)&
\\
1
\end{bmatrix}.
\label{eq:v_sk}
\end{equation}
is in the null space of $\left(\epsilon_{\mathbf{s}_k}+\beta_{\mathbf{s}_k}\left(1+\left\|\vec{p}-\vec{s}_k\right\|\right) \right) \vec{I}-\boldsymbol\Psi_{\mathbf{s}_k}$, i.e. 
\begin{equation}
\left(\left(\epsilon_{\mathbf{s}_k}+\beta_{\mathbf{s}_k}\left(1+\left\|\vec{p}-\vec{s}_k\right\|\right) \right) \vec{I}- \boldsymbol\Psi_{\mathbf{s}_k} \right)\mathbf{v}_{\mathbf{s}_k} = \vec{0}.
\label{eq:evec_null_space}
\end{equation}
Hence $\vec{v}_{\vec{s}_k}$ is the eigenvector corresponding to the eigenvalue in \eqref{eq:non_zero_eval_proof}.


%


%
%
%
%
%

\ifCLASSOPTIONcaptionsoff
  \newpage
\fi



%
\bibliographystyle{IEEEtran}
\bibliography{bibliography}
\end{document}